\def\Tr{\mathsf{T}}
\newtheorem{lemma}{\hspace{0pt}\bf Lemma}
\newtheorem{proposition}{\hspace{0pt}\bf Proposition}
\newtheorem{theorem}{\hspace{0pt}\bf Theorem}
\newtheorem{definition}{\hspace{0pt}\bf Definition}
\title{THE GRAPHON FOURIER TRANSFORM}
\name{Luana Ruiz$^{\dag}$, Luiz F. O. Chamon$^{\dag}$ and Alejandro Ribeiro$^{\dag}$\thanks{Supported by USA NSF CCF 1717120 and ARO W911NF1710438. 
		}
}
\address{\dag \ Department of Electrical and Systems Engineering, University of Pennsylvania, Philadelphia, USA
}
\begin{document}
\ninept
\maketitle
\begin{abstract}
In many network problems, graphs may change by the addition of nodes, or the same problem may need to be solved in multiple similar graphs. This generates inefficiency, as analyses and systems that are not transferable have to be redesigned. 
To address this, we consider graphons, which are both limit objects of convergent graph sequences and random graph models. We define graphon signals and introduce the Graphon Fourier Transform (WFT), to which the Graph Fourier Transform (GFT) is shown to converge. This result is demonstrated in two numerical experiments where, as expected, the GFT converges, hinting to the possibility of centralizing analysis and design on graphons to leverage transferability.
\end{abstract}
\begin{keywords}
graphons, convergent graph sequences, graph filters, graph Fourier transform, graph signal processing
\end{keywords}
\section{Introduction}
\label{sec:intro}
System \textit{transferability} is a prevalent problem in signal processing, statistics and machine learning \cite{eisen2019optimal,elvin19-spectral}. For instance, when designing information processing architectures on networks that are bound to grow, we want to avoid making adjustments every time a new node is added to the network. This is the case, for instance, of streaming services that get thousands of new users every day and whose recommendation algorithms run on user-similarity networks \cite{weiyu18-movie,ruiz19-inv}.
Another example is reproducing a certain type of low-dimensional feature analysis on multiple instances of the same type of graph, eg., quantifying air pollution dispersion spectra on air quality sensor networks in different cities (cf. Section \ref{sec:sims}); it can be inefficient to have to recalculate the parameters of the transformation for each network where the analysis is repeated.

There are many reasons why redesigning a system or readjusting a transformation's parameters on different networks can be challenging. To begin with, in graphs whose number of nodes can grow, it becomes increasingly hard to measure the graph: it is either too big to obtain full measurements, or changes too frequently for any isolated measurement to be meaningful. This is a significant drawback in graph signal processing \cite{shuman13-mag,sandryhaila13-dspg,sandryhaila14-freq,ortega2018graph}, where problems like graph filter design usually assume full knowledge of the underlying graph \cite{segarra17-linear, sandryhaila2013discrete}. Very large graphs are also difficult to visualize \cite{wills1999nicheworks} and store, especially when they are not sparse. Even though this can be mitigated by dimensionality reduction techniques based on the graph's spectral decomposition \cite{rui2016dimensionality}, computing interior eigenvalues of large matrices is itself a costly operation \cite{morgan1991computing,paige1971computation}. Meanwhile, problems that involve replicating an analysis or design to a set of graphs with common characteristics suffer from similar issues: when there are many of these graphs, measuring, storing, visualizing and decomposing them can quickly become an expensive task. 

To avoid these issues, we propose a signal processing framework that enables a centralized and transferable approach to network systems based on \textit{graphons}. Graphons are infinite-dimensional representations of graphs that are at once limit objects of convergent graph sequences and random graph models  \cite{lovasz2012large}. The former makes them good representations of large graphs; the latter provides a probability model to which a family of graphs can be tied. 
The use of graphons is justified by the fact that, in practical problems involving graphs that grow or families of graphs, it is often the case that a certain large scale structure is retained (eg., the edges are always connected according to the same set of rules). This is exemplified in Figure \ref{fig:geom_graphs}, where the structural similarities between different geometric graphs of same size (left and center) and of different sizes (center and right) can be seen by noticing that there only exist edges between nodes positioned within a fixed radius of one another.
Graphons have been studied in multiple fields, and are the object of works that include accounting for uncertainty in the estimation of graph models \cite{wolfe2013nonparametric, airoldi2013stochastic} and computing properties such as node centrality \cite{avella2018centrality}, clustering coefficients \cite{diao2016model} and network game equilibria \cite{parise2019graphon} in very large networks.
In this paper, we lay the ground for a graphon signal processing framework by defining graphon signals as the limit objects of sequences of graph signals and by introducing the main building block of the framework --- the Graphon Fourier Transform --- in Def. \ref{defn:wft}. We also show that, for sequences of graphs converging to graphons, the Graph Fourier Transform converges to the Graphon Fourier Transform (Theorem \ref{thm:wft}), as suggested by the analysis done in \cite{morency2017signal} for some specific random graph models. This is an important result, as it hints to the possibility of centralizing the analysis and design of network systems on the graphon associated with a sequence or family of graphs.

\begin{figure*}[t]
\hspace{0.5cm}
\begin{minipage}[t]{0.33\linewidth}
  \centering
  \includegraphics[width=0.85\columnwidth, height = 0.45\columnwidth]{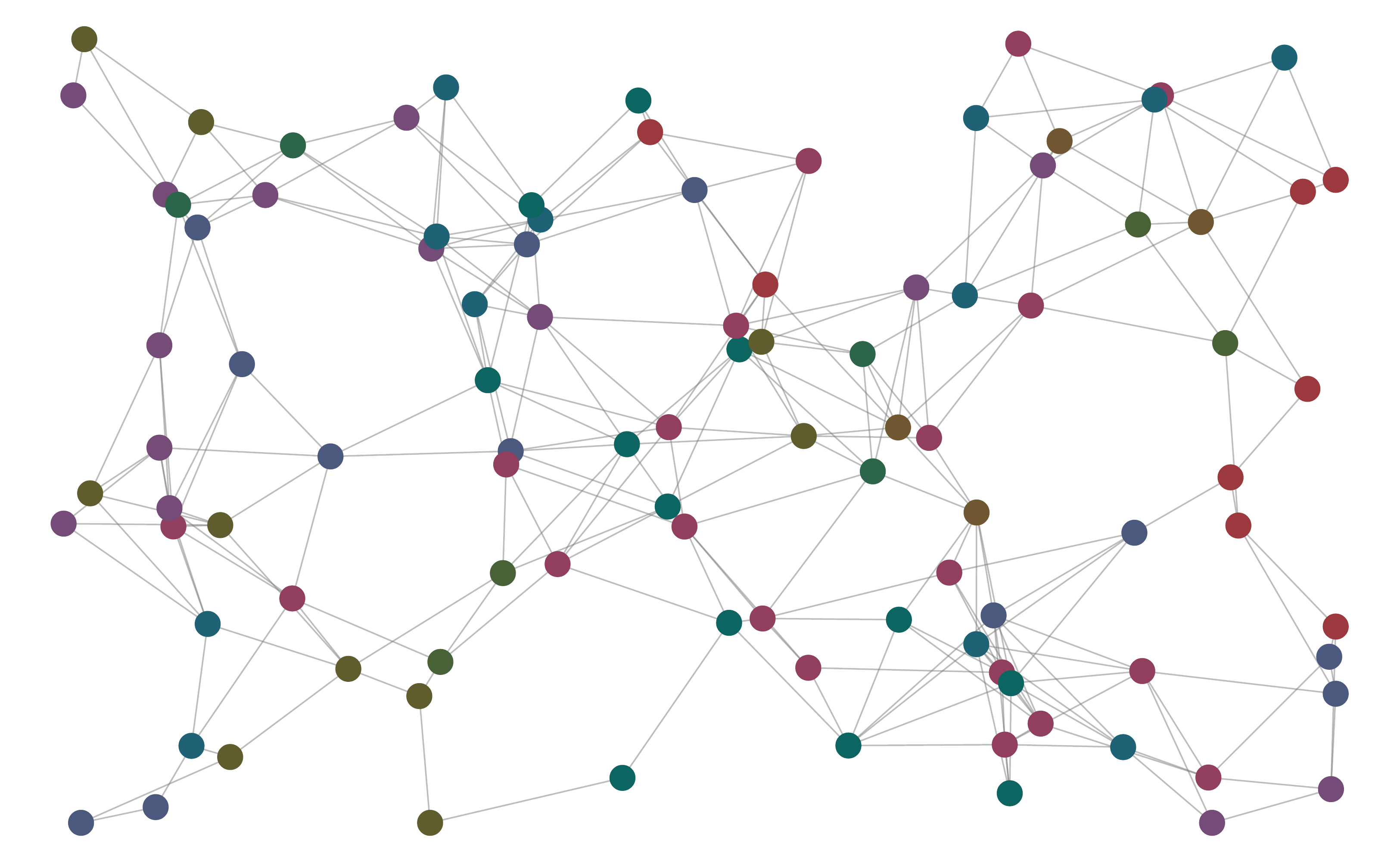}
  \label{50nodes1}
\end{minipage}
\hspace{-0.5cm}
\begin{minipage}[t]{0.33\linewidth}
  \centering
  \includegraphics[width=0.85\columnwidth, height = 0.45\columnwidth]{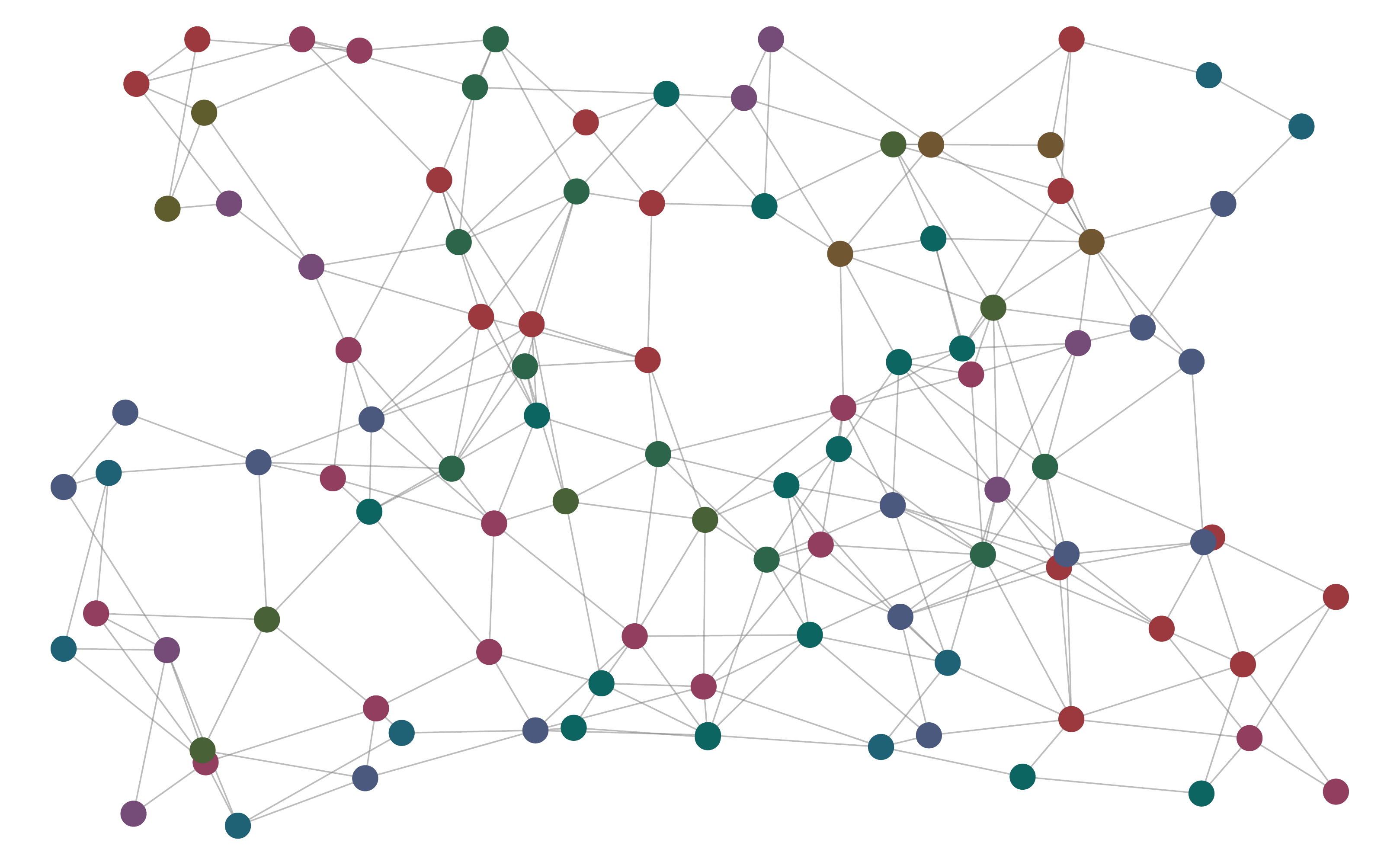}
  \label{50nodes2}
\end{minipage}
\hspace{-0.5cm}
\begin{minipage}[t]{0.33\linewidth}
  \centering
  \includegraphics[width=0.85\columnwidth, height = 0.45\columnwidth]{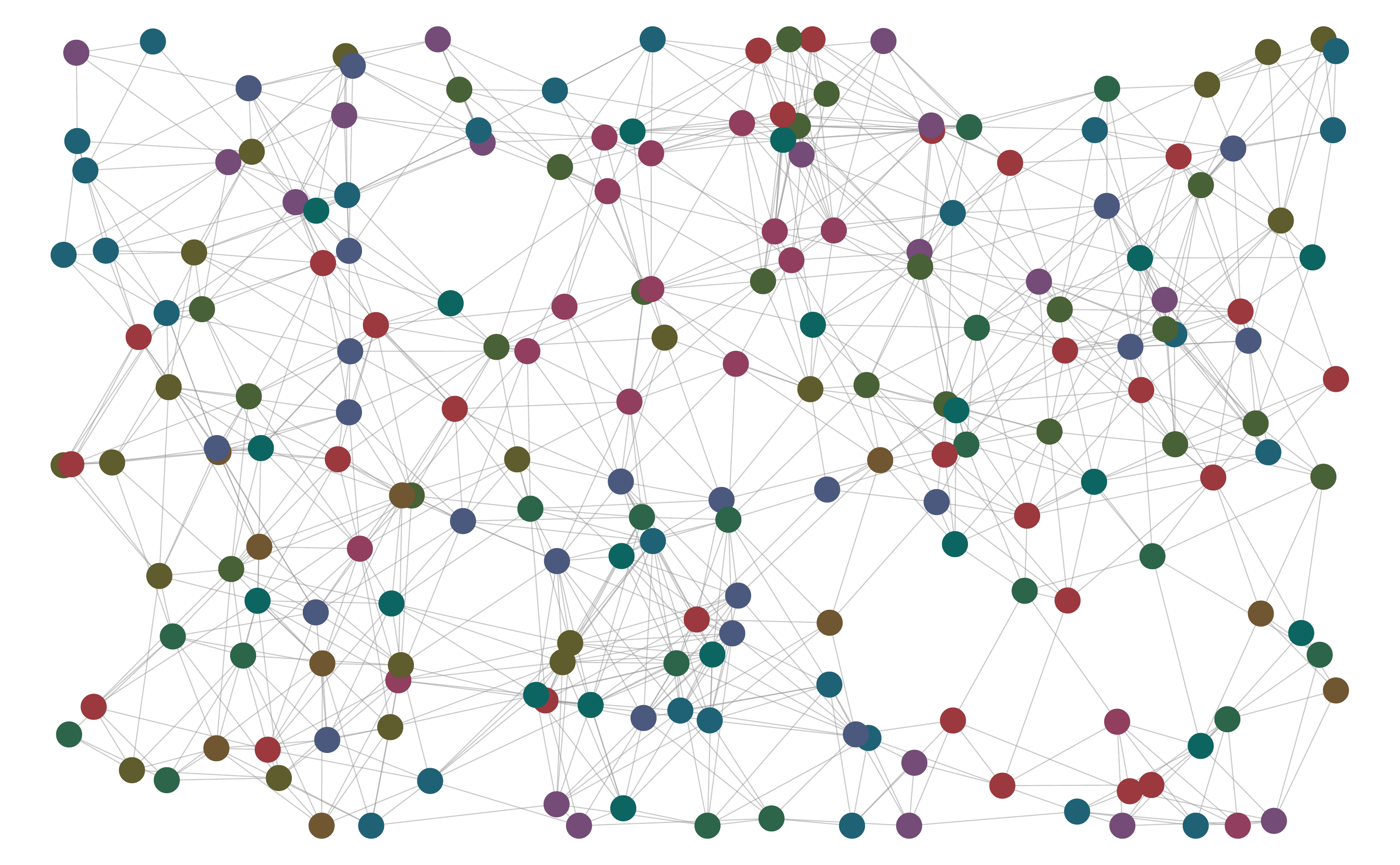}
  \label{100nodes}
\end{minipage}
\caption{Examples of graphs with $n=50$ (left, center) and $n=100$ (right) drawn from a random geometric graph model with radius $0.25$.}
  \label{fig:geom_graphs}
\end{figure*}

In Section \ref{sec:sims}, Theorem \ref{thm:wft} is illustrated in two numerical experiments. In the first, we draw two $n$-node air pollution sensor networks from a common graphon model for growing $n$ and compare the GFTs of signals generated from the same pollution model. This experiment alludes to situations where we want to compare the spectra of the same type of signal on different graphs, in order to find similarities that are not noticeable on the vertex domain. This demonstrates the limit behavior described by Theorem \ref{thm:wft}, and, for large enough $n$, evidences the possibility of designing analyses and systems on one network (or on the graphon itself) and applying them to other graphs associated with the same graphon. In the second experiment, we compare the GFTs of movie ratings on small and large user networks to illustrate a more practical consequence of our results on graphs that, although inherently similar, are built from model-free data and are not related by any common generating graphon.



\section{Graphs and Graphons}
\label{sec:gsp_graphons}

Symmetric graphs are triplets $G = (V, E, A)$ where $V$ is a set of $n=|V|$ nodes, $E \subseteq V \times V$ is a set of edges and $A: E \to I \subseteq \reals$ is a weight function assigning real-valued weights $A(i,j)=A(j,i)$ in the set $I$ to edges $(i,j) \in E$. A graph is unweighted if $I = \{0,1\}$, in which case we omit $A$ and simply write $G = (V, E, A) = (V, E)$. Graphons are bounded, symmetric and measurable functions $\ccalW: [0,1]^2 \to [0,1]$. If we interpret points $u$ and $v$ of the unit line as nodes, $\ccalW(u,v)$ can be seen as the weight of the edge connecting $u$ and $v$. We can thus think of $\ccalW$ as a graph with an uncountable number of nodes that will serve as the limit of a sequence of graphs $\{G_n\}$ with growing number of nodes $n$.

To study convergence of graph sequences to graphons consider arbitrary unweighted and undirected reference graphs $F = (V', E')$. We define homomorphisms of $F$ into $G$ as adjacency preserving maps $\beta:V'\to V$ in which $(i,j)\in E'$ implies $(i,j)\in E$. There are a total of $|V|^{|V'|} = n^{n'}$ maps from $V'$ to $V$ but only some of them are adjacency preserving homomorphisms. We therefore define $\mbox{hom}(F,G) = \sum_{\beta} \prod_{(i,j) \in E'} A(\beta(i),\beta(j))$ as a weighted sum of the total number of homomorphisms that map $F$ into $G$ and the density of homomorphisms as the ratio \cite{lovasz2012large},
\begin{equation} \label{eqn:hom_density}
   t(F, G) := \frac{\mbox{hom}(F,G)}{n^{n'}} 
          := \frac{\sum_{\beta} \prod_{(i,j) \in E'} A(\beta(i),\beta(j))}{n^{n'}} .
\end{equation}
The interpretation of \eqref{eqn:hom_density} is easiest when $G$ is unweighted. In that case $\prod_{(i,j) \in E'} A(\beta(i),\beta(j))=1$ and we are just counting the total number of ways in which the reference graph $F$ can be mapped into $G$. We can think of $F$ as a motif and of $t(G,F)$ as how often that motif appears in $G$ relative to the maximum number of times the motif could appear.

The homomorphism density in \eqref{eqn:hom_density} can be generalized to graphons if we replace the sum by an integral. Thus, for a given reference motif $F$ we define the homomorphism density of $F$ into the graphon $\ccalW$ as
\begin{equation}\label{eqn_hom_density_graphon}
   t(F,\ccalW) := \int_{{[0,1]}^{V'}} 
                      \prod_{(i,j) \in E'} \ccalW(u_i,u_j) 
                            \prod_{i \in V'} du_i .
\end{equation}
Given eqs. \ref{eqn:hom_density} and \ref{eqn_hom_density_graphon} we say that the graph sequence $\{G_n\}$ converges to the graphon $\ccalW$ if for any given motif $F$ it holds
\begin{equation} \label{eqn_graphon_convergence}
   \lim_{n\to\infty} t(F,G_n) = t(F,\ccalW).
\end{equation}
To understand the meaning of \eqref{eqn_graphon_convergence}, it is instructive to consider graphs $G_n$ sampled from a graphon $\ccalW$, which are called $\ccalW$\textit{-random graphs}. These graphs have labels $ u_i \sim U [0,1]$ drawn uniformly and independently at random from $[0,1]$, and edge sets such that $(u_i, u_j) \in E$ with probability $\ccalW(u_i,u_j)$. It is possible to see that \eqref{eqn_graphon_convergence} holds with probability 1 for sequences of $\ccalW$-random graphs. The convergence mode in \eqref{eqn_graphon_convergence} allows for more general generative models asides from these randomly sampled graphs. We point out that every graphon is the limit object of a sequence of convergent graphs and, conversely, that every convergent graph sequence converges to a graphon \cite{lovasz2012large} --- i.e., if the limit $\lim_{n\to\infty} t(F,G_n)$ exists it can be written as an integral of the form of \eqref{eqn_hom_density_graphon}.


\section{Graph Signals, Graphon Signals and their Fourier Transforms}
\label{sec:wft}

We define a graph signal as the pair $(G,\bbx)$ where $G$ is the graph on which the signal is supported and $\bbx \in \reals^n$ is such that $[\bbx]_i$ stands for the value of this signal at node $i$ of $G$. The vector $\bbx$ can be seen as a \textit{vertex representation} of data, but graph signals also admit a \textit{spectral representation} obtained by application of the Graph Fourier Transform (GFT) \cite{sandryhaila14-freq}. This is defined with respect to a shift operator $\bbS$, which is a stand-in for a matrix representation of the graph $G$. Since we are working with undirected graphs we have that $\bbS$ is symmetric and that it therefore accepts an eigenvector decomposition of the form $\bbS= \bbV\bbLam\bbV^H$ with $\bbV$ being the orthonormal matrix of eigenvectors and $\bbLam$ a diagonal matrix containing the eigenvalues of $\bbS$. The GFT of $(G,\bbx)$ is defined as
\begin{equation}
   \hbx_G = \mbox{GFT}\{(G,\bbx)\} := \bbV^\mathsf{H} \bbx
\end{equation}
which is equivalent to decomposing $(G,\bbx)$ in the spectral basis of $G$. Where there is no ambiguity, we may omit the subscript $G$. Because $\bbV^\mathsf{H}$ is orthonormal, an inverse transformation can also be defined. We write the inverse Graph Fourier Transform (iGFT) of $\hbx_G$ as $\mbox{iGFT}\{\hbx_G\} := \bbV \hbx_G = \bbx$.
The GFT allows switching from the vertex to the spectral domain, and the iGFT provides the way back. The goal of this section is to define graphon analogous of graph signals, GFTs, and iGFTs.

\subsection{Graphon Signals and Limits of Graph Signal Sequences} \label{sbs:graphon_signals}

A graphon signal is a pair $(\ccalW,X)$, where $\ccalW$ is a graphon and $X: [0,1] \to \reals$ is a function mapping points $u \in [0,1]$ onto real numbers $X(u)$. We only consider finite energy graphon signals $X \in L^2$. Graphon signals can be seen as the continuous counterpart of graph signals. 
It will again be important to consider $X$ in conjunction with $\mathcal{W}$.

Every graph signal $(G,\bbx)$ induces a graphon signal $(\ccalW_G, X_G)$. This ``continuous'' representation of a graph signal will be useful in the derivations carried out in Section \ref{sec:convergence}. Explicitly, let $I_j$, $j = 1,\dots,N$, be a partition of the unit interval. Then, $(\ccalW_G, X_G)$ is defined as
\begin{align}\label{E:induced_graphon}
	\ccalW_G(u,v) = [\bbS]_{jk} \times \mbI\left( u \in I_j \right)
		\mbI\left( v \in I_k \right)
	\\
	X_G(v) = [\bbx]_{k} \times \mbI\left( v \in I_k \right)
		\text{.}
\end{align}

On sequences of graphs $\{G_n\}$ converging to a graphon $\ccalW$, sequences of graph signals $\{(G_n,\bbx_n)\}$ converging to graphon signals $(\ccalW, X)$ can always be defined. We characterize their convergence in Def. \ref{defn:convergent_pairs}. 
\begin{definition}[Convergent sequences of graph signals] \label{defn:convergent_pairs}
A sequence of graph signals $\{(G_n, \bbx_n)\}$ is said to converge to the graphon signal $(\mathcal{W}, X)$ if there exists a sequence of permutations $\{\pi_n\}$ such that
\begin{equation}
t(F,G_n) \to t(F,\mathcal{W})
\end{equation}
for any unweighted and undirected graph $F$ and
\begin{equation}
\|X_{\pi_n(G_n)}-X\|_{L^2} \to 0
\end{equation}
where $(\ccalW_{\pi_n(G_n)}, X_{\pi_n(G_n)})$ denotes the step function graphon signal induced by $(\pi_n(G_n),\pi_n(\bbx_n))$.
\end{definition}
In short, we say that $\{(\bbx_n, G_n)\}$ converges to $(X,\mathcal{W})$ when: (i) $G_n$ converges to $\mathcal{W}$ in the homomorphism density sense, and (ii) for some sequence of permutations $\{\pi_n\}$ of the labels of each $G_n$, the graphon signal induced by $\bbx_n$ converges to $X$ in $L^2$.
Notice that, similarly to how graphons model graph families, graphon signals can also be interpreted as models for network phenomena. A graph signal can be sampled from the graphon signal at the same locations $u_i$ that a $\ccalW$-random graph is sampled from the graphon. Since $\ccalW$-random graphs converge to the graphon in probability, as long as the graphon signal is in $L^2$ any sequence of graph signals generated in this way can be shown to converge to the graphon signal (in the sense of Def. \ref{defn:convergent_pairs}) with probability 1.




\subsection{Graphon Fourier Transform} \label{sbs:wft}

Because $\mathcal{W}$ is bounded, it defines a Hilbert-Schmidt integral operator $T_\mathcal{W}: L^2 \to L^2$ on $(\ccalW,X)$,
\begin{equation} \label{eqn:graphon_shift}
(T_\mathcal{W} X)(v) := \int_0^1 \mathcal{W}(u,v)X(u)du
\end{equation}
which we call the \textit{graphon shift operator} (WSO). Since $\ccalW$ is symmetric, $T_\ccalW$ is a self-adjoint operator that can be decomposed as $ 
\mathcal{W}(u,v) \sim \sum_{i=0}^{\infty} \sigma_i \varphi_i(u) \varphi_i(v)$, where $\sigma_i \in [-1,1]$ and $\varphi_i: [0,1] \to \mbR$ are countable. Splitting between positive and negative eigenvalues, we can reorder the $\sigma_i$ and $\varphi_i$ as $\sigma_j$ and $\varphi_j$ with $j \in \mbZ\setminus\{0\}$ such that $1 \geq \sigma_1 \geq \sigma_2 \geq \ldots \geq 0 \geq \ldots \geq \sigma_{-2} \geq \sigma_{-1} \geq -1$. In this ordering, $\sigma_j \to 0$ as $|j| \to \infty$ due to compactness of $T_\ccalW$; zero is the only possible point of accumulation, which means that all $\sigma_j \neq 0$ have finite multiplicities \cite[Chapter 28, Theorem 3]{lax02-functional}. 
The eigenfunctions $\varphi_j$ form the \textit{graphon spectral basis}, which is a complete orthonormal basis of $L^2[0,1]$. The inner products $\langle X, \varphi_j\rangle$ thus yield a complete representation of $(\ccalW,X)$ that we call the Graphon Fourier Transform.

\begin{definition}[Graphon Fourier Transform]  \label{defn:wft}
Consider the graphon signal $(\ccalW,X)$ with $\mathcal{W}: [0,1]^2 \to [0,1]$ and $X: [0,1] \to \mbR$. Let $\{\sigma_j\}$ and $\{\varphi_j\}$ form the spectral decomposition of $T_\mathcal{W}$. 
Then, the Graphon Fourier Transform (WFT) of $(\ccalW,X)$, denoted $\hat{X} = \mbox{WFT}\{(\ccalW,X)\}$, is
\begin{equation}
[\hat{X}]_j = \hat{X}(\sigma_j) = \langle X, \varphi_j\rangle = \int_{0}^1 X(u) \varphi_j(u) du
\end{equation}
and the inverse Graphon Fourier Transform (iWFT) of $\hat{X}$ is $
\mbox{iWFT}\{\hat{X}\} := \sum_j \hat{X}(\sigma_j) \varphi_j = X$.
\end{definition}

Existence of the iWFT is guaranteed by orthonormality of the $\{\varphi_j\}$. Using the WFT, we can define graphon signals that are \textit{bandlimited}.

\begin{definition}[Bandlimited graphon signals] \label{defn:bandlimited}
A graphon signal $(\ccalW,X)$ is said to be $c$-bandlimited if, for some $ c \in (0,1)$, the WFT coefficients $\hat{X}(\sigma_j)$ for which $j \in \{k\ \text{ s.t. }\ |\sigma_k| < c\}$  are zero. In particular, $\hat{X}$ has finite dimension for bandlimited graphon signals $(\ccalW, X)$.
\end{definition}

It is important to point out that Def. \ref{defn:wft}, as well as the definition of a graphon signal, are not realizable in the way that graph signals and the GFT are. Unlike graphs, graphons are intangible objects, but their value lies in providing a complete and concise representation of multiple graphs belonging to the same ``class'' (eg. \cite{schaub19-spectral,wolfe2013nonparametric}), or of \textit{very large} graphs whose properties converge towards those of the graphon, as extensively demonstrated in \cite{lovasz2012large}. Similarly to \cite{avella2018centrality}, where the notion of centrality was extended to graphons, the definitions in this section should be interpreted as merely a parallel to graph signal processing concepts, intended to provide a comparison with the case of discrete graphs and to enable the convergence analysis carried out in Section \ref{sec:convergence}.


\section{The GFT converges to the WFT} \label{sec:convergence}

The WFT only depends on the graphon through its eigenfunctions, whereas the GFT only depends on the graph's eigenvectors. Hence, for a convergent sequence of graph signals (Def. \ref{defn:convergent_pairs}), convergence of the GFT to the WFT is equivalent to the graph's eigenvectors converging to the graphon's eigenfunctions. 
When graph signals converge to a graphon signal that is \textit{bandlimited}, and when the underlying graphon is \textit{non-derogatory}, this is indeed the case (cf. Lemma \ref{T:eigenvectors_convergence}). This is stated in Theorem \ref{thm:wft}, but first we define non-derogatory graphons in Def. \ref{defn:non-derog}. 
\begin{definition}[Non-derogatory graphons] \label{defn:non-derog}
A graphon $\ccalW$ is said to be non-derogatory if $\sigma_j \neq \sigma_k$ for $j \neq k$.
\end{definition}
\begin{theorem}[GFT convergence] \label{thm:wft}
Let $\{(G_n,\bbx_n)\}$ be a sequence of graph signals converging to the $c$-bandlimited graphon signal $(\ccalW,X)$, where $\ccalW$ is non-derogatory. Then, there exists a sequence of permutations $\pi_n$ such that $\text{GFT}\{(\pi_n(G_n),\pi_n(\bbx_n))\} \to \text{WFT}\{(\ccalW,X)\}$ in the sense that $[\hbx_n]_j/\sqrt{n} \to [\hat{X}]_j$ for all $j \in \mbZ \setminus \{0\}$ as $n \to \infty$.
\end{theorem}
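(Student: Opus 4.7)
The plan is to lift everything to the graphon domain and exploit the fact that the (scaled) GFT of $(G_n,\bbx_n)$ is exactly the WFT of the induced graphon signal. First, I would observe that if $\bbv$ is a unit eigenvector of the shift operator $\bbS_n$ with eigenvalue $\lambda$, and $I_1,\dots,I_n$ is the equipartition used in \eqref{E:induced_graphon}, then the step function $\varphi(u) = \sqrt{n}\,[\bbv]_j$ for $u \in I_j$ is an eigenfunction of the induced operator $T_{\ccalW_{G_n}}$ with eigenvalue $\lambda/n$, and $\|\varphi\|_{L^2}=1$. A direct computation then gives
\begin{equation}
\langle X_{G_n},\varphi\rangle \;=\; \tfrac{1}{\sqrt{n}}\,\langle\bbx_n,\bbv\rangle \;=\; \tfrac{1}{\sqrt{n}}\,[\hbx_n]_j,
\end{equation}
so the statement $[\hbx_n]_j/\sqrt{n}\to[\hat X]_j$ is equivalent to convergence of the WFT of the induced graphon signal $(\ccalW_{G_n},X_{G_n})$ to the WFT of $(\ccalW,X)$ on each bandlimited mode.

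Second, I would reduce this to a joint convergence of operators and signals. For each fixed bandlimited index $j$, I need
\begin{equation}
\langle X_{\pi_n(G_n)},\varphi_j^{(n)}\rangle \;\longrightarrow\; \langle X,\varphi_j\rangle,
\end{equation}
where $\varphi_j^{(n)}$ is the eigenfunction of $T_{\ccalW_{\pi_n(G_n)}}$ associated with the $j$-th eigenvalue. Splitting
\begin{equation}
\langle X_{\pi_n(G_n)},\varphi_j^{(n)}\rangle - \langle X,\varphi_j\rangle
= \langle X_{\pi_n(G_n)}-X,\varphi_j^{(n)}\rangle + \langle X,\varphi_j^{(n)}-\varphi_j\rangle,
\end{equation}
the first term vanishes by Cauchy--Schwarz since $\|\varphi_j^{(n)}\|_{L^2}=1$ and $\|X_{\pi_n(G_n)}-X\|_{L^2}\to 0$ by Def.~\ref{defn:convergent_pairs}, while the second vanishes once we have $\varphi_j^{(n)}\to\varphi_j$ in $L^2$ (modulo sign).

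Third, for the eigenfunction convergence I would invoke Lemma~\ref{T:eigenvectors_convergence}, whose hypotheses are precisely what we have: convergence of $G_n$ to $\ccalW$ in homomorphism density (equivalently, in cut distance by Lovász), which yields operator-norm convergence $T_{\ccalW_{\pi_n(G_n)}}\to T_\ccalW$, combined with the non-derogatory assumption on $\ccalW$. The non-derogatory property guarantees that each non-zero eigenvalue $\sigma_j$ is simple and spectrally isolated, so the spectral projections (and thus eigenfunctions, up to sign) of the perturbed operators converge --- the standard Davis--Kahan type argument. The $c$-bandlimited hypothesis is crucial here: it restricts attention to the finitely many indices $j$ with $|\sigma_j|\ge c$, across which the permutations $\pi_n$ can be chosen uniformly and the signs of $\varphi_j^{(n)}$ aligned with those of $\varphi_j$.

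The main obstacle I anticipate is reconciling the labellings: the lemma produces some convergent eigenfunctions, but we need a single permutation sequence $\{\pi_n\}$ that simultaneously delivers the signal convergence of Def.~\ref{defn:convergent_pairs} and aligns all bandlimited eigenfunctions with the right sign. The bandlimitedness bounds this to a finite list, so that after passing to a subsequence and fixing signs on this finite list we can extract such a $\pi_n$; but some care is needed to verify that the permutation that aligns the signal does not disturb the spectral alignment, which is handled by the fact that eigenfunctions are intrinsic to the kernel and hence permutation-equivariant. Once this is in place, combining the three displays above yields the claimed mode-wise convergence for every $j\in\mbZ\setminus\{0\}$ with $|\sigma_j|\ge c$, and the remaining modes are zero by bandlimitedness, completing the proof.
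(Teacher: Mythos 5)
Your proposal follows essentially the same route as the paper: lift the graph signal to its induced graphon signal (the content of Lemma~\ref{T:induced_graphon}, which identifies $[\hbx_n]_j/\sqrt{n}$ with the WFT of $(\ccalW_{G_n},X_{G_n})$), split the inner-product difference into a signal term handled by Cauchy--Schwarz and Definition~\ref{defn:convergent_pairs}, and an eigenfunction term handled by Lemma~\ref{T:eigenvectors_convergence} (cut-norm convergence plus the Davis--Kahan bound under the non-derogatory hypothesis). Two small remarks. First, for the term $\langle X,\varphi_j^{(n)}-\varphi_j\rangle$ weak convergence of the eigenfunctions already suffices, so your worry about upgrading to $L^2$ convergence (and the sign alignment) is not essential at this point. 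Second, and more substantively, your closing claim that ``the remaining modes are zero by bandlimitedness'' only disposes of the WFT side: for $j\notin C$ the theorem still asserts $[\hbx_n]_j/\sqrt{n}\to[\hat X]_j=0$, and the graph-side coefficients $\langle X_n,\varphi_j^{(n)}\rangle$ are not zero a priori. You need the second statement of Lemma~\ref{T:eigenvectors_convergence}, namely that $\varphi_j(T_{\ccalW_n})\to\phi\in\ccalE$ for $j\notin C$, together with the orthogonality $X\perp\ccalE$ coming from $c$-bandlimitedness, and then the same two-term split as before. This is exactly how the paper finishes; with that step made explicit your argument is complete.
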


This theorem has three main takeaways. 
First, it allows drawing insights about the general spectral behavior of a graph signal without knowing the graph and/or the signal exactly, so long as the ``class'' to which the graph belongs (its generating graphon) and the generative model of the graph signal (the graphon signal) are known.
Secondly, it links the Fourier transform on the ``discrete'' domain of graphs to the Fourier transform on the ``continuous'' domain of graphons. In the limit, this relationship resembles that of the discrete Fourier transform (DFT) with the continuous Fourier transform (CFT) for time signals. Finally, Theorem \ref{thm:wft} also works in the opposite direction, meaning that if we sample a graph signal from the graphon signal, its GFT can be sampled from the WFT. This follows from the fact that sampled sequences of graph signals converge to the generating graphon signal with probability 1. We also point out that the requirement that the graphon be non-derogatory is not very restrictive because the space of non-derogatory graphons is dense in the space of graphons (cf. Prop. \ref{T:non_derog_graphons_dense} in the appendices), and so, for every derogatory graphon, there exists a non-derogatory graphon arbitrarily close (in the appropriate metric) for which Theorem \ref{thm:wft} holds. 

We now proceed to proving Theorem \ref{thm:wft}.

\begin{proof}[Proof of Theorem \ref{thm:wft}]
Although the GSOs $\bbS_n$ of the graphs $G_n$ have a finite number of eigenvalues $\lambda_j$, we maintain the convention of associating the eigenvalue sign with its index and ordering the eigenvalues in decreasing order of absolute value. Hence, the indices $j$ are now defined on some finite set $L \subseteq \mbZ \setminus \{0\}$.
It will be useful to consider the continuous representation of the graph signals $(G_n, \bbx_n)$ provided by their induced graphon signals $(\ccalW_{G_n},X_{G_n})$ (eq. \eqref{E:induced_graphon}). It is fundamental that $(\ccalW_{G_n},X_{G_n})$ retains the spectral properties of $(G_n, \bbx_n)$, which is guaranteed by Lemma \ref{T:induced_graphon} below.
\begin{lemma}\label{T:induced_graphon}
Let $(\ccalW_G,X_G)$ be the graphon signal induced by the $n$-node graph signal $(G,\bbx)$ as defined in \eqref{E:induced_graphon}. Then, for $j \in L$ we have
\begin{align*}
	\sigma_j(T_{\ccalW_G}) &= \frac{\lambda_{j}(\bbS)}{n}
	\\
	\varphi_j(T_{\ccalW_G})(u) &= [\bbv_j]_{k}
		\times \sqrt{n} \mbI\left( u \in I_k \right)
	\\
	[\hat{X}_G]_j &= \dfrac{[\hbx]_j}{\sqrt{n}}
\end{align*}
For $j \notin L$, $\sigma_j(T_{\ccalW_G}) = [\hat{X}_G]_j = 0$ and $\varphi_j(T_{\ccalW_G}) = \phi_j$, such that $\{\varphi_j(T_{\ccalW_G})\} \cup \{\phi_j\}$ forms an orthonormal basis of $L^2([0,1])$.
\end{lemma}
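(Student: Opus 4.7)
The plan is to prove each of the three identities by direct computation, exploiting the fact that $\ccalW_G$ and $X_G$ are step functions on the partition $\{I_k\}_{k=1}^n$ (taking $|I_k|=1/n$ without loss of generality, since permuting the nodes amounts to relabeling the intervals). The argument amounts to checking that the map $X \mapsto \sqrt{n}\sum_k [X]_k \mbI(u\in I_k)$ is an isometric embedding of $\reals^n$ into $L^2([0,1])$ that intertwines $\bbS/n$ with $T_{\ccalW_G}$.

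First, I would verify the eigenpair claim. Define the candidate $\varphi_j(u):=\sqrt{n}\sum_{k}[\bbv_j]_k\mbI(u\in I_k)$ for each eigenvector $\bbv_j$ of $\bbS$. Substituting into \eqref{eqn:graphon_shift} and using that $\ccalW_G(u,v)=\sum_{j',k}[\bbS]_{j'k}\mbI(u\in I_{j'})\mbI(v\in I_k)$, for $v\in I_\ell$ the integral collapses to
\begin{equation*}
(T_{\ccalW_G}\varphi_j)(v)=\sum_{j'}[\bbS]_{j'\ell}\sqrt{n}[\bbv_j]_{j'}\cdot\tfrac{1}{n}=\tfrac{1}{\sqrt{n}}[\bbS\bbv_j]_\ell=\tfrac{\lambda_j}{n}\varphi_j(v),
\end{equation*}
which yields $\sigma_j(T_{\ccalW_G})=\lambda_j/n$ and $\varphi_j$ as claimed. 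Orthonormality follows from $\langle\varphi_j,\varphi_{j'}\rangle = n\sum_k[\bbv_j]_k[\bbv_{j'}]_k\cdot(1/n)=\bbv_j^\Tr\bbv_{j'}=\delta_{jj'}$.

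Second, for the Fourier coefficients I would expand
\begin{equation*}
[\hat{X}_G]_j=\langle X_G,\varphi_j\rangle=\int_0^1\!\Big(\!\sum_k[\bbx]_k\mbI(v\in I_k)\!\Big)\!\Big(\!\sqrt{n}\sum_{k'}[\bbv_j]_{k'}\mbI(v\in I_{k'})\!\Big)dv,
\end{equation*}
which, by orthogonality of the indicators and $|I_k|=1/n$, equals $(1/\sqrt{n})\bbv_j^\Tr\bbx=[\hbx]_j/\sqrt{n}$.

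Finally, for indices $j\notin L$, the point is that $T_{\ccalW_G}$ has rank at most $n$: its range lies in the $n$-dimensional subspace $\ccalS_n\subset L^2([0,1])$ of functions that are constant on each $I_k$, and the $\{\varphi_j\}_{j\in L}$ already form an orthonormal basis for the span of the nonzero-eigenvalue eigenfunctions inside $\ccalS_n$. Hence $T_{\ccalW_G}$ vanishes on the orthogonal complement of that span, and picking any orthonormal basis $\{\phi_j\}_{j\notin L}$ of this infinite-dimensional kernel (together with any zero eigenfunctions inside $\ccalS_n$) extends $\{\varphi_j\}_{j\in L}$ to a complete orthonormal basis of $L^2([0,1])$, with $\hat{X}_G$ vanishing on these directions because $X_G\in\ccalS_n$ is already expressed by its $n$ coefficients $[\hbx]_k/\sqrt{n}$ via Parseval. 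The main obstacle is essentially notational bookkeeping: isolating the isometry $\reals^n\to\ccalS_n$ cleanly and verifying that the residual kernel can always be completed to an orthonormal basis of $L^2$, so that the three identities extend unambiguously to all $j\in\mbZ\setminus\{0\}$.
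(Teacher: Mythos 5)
Your proposal is correct and follows essentially the same route as the paper: a direct computation showing that the step-function embedding sends eigenvectors of $\bbS$ to eigenfunctions of $T_{\ccalW_G}$ with eigenvalues $\lambda_j/n$, collapses the inner product to $\bbv_j^\Tr\bbx/\sqrt{n}$, and handles $j\notin L$ by orthogonality and basis completion in the orthogonal complement of the step-function subspace. The isometry/intertwining framing is a tidy packaging, but the underlying calculations coincide with those in the paper's proof.
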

\begin{proof} \renewcommand{\qedsymbol}{}
Refer to the appendices.
\end{proof}
It then suffices to show $\text{WFT}(\ccalW_{\pi_n(G_n)},\pi_n(X_{G_n})) \to \text{WFT}(\ccalW,X)$, since $L \to \mbZ \setminus \{0\}$ as $n \to \infty$. We will leave the dependence on $\pi_n(G_n)$ implicit and write $\ccalW_n = \ccalW_{\pi_n(G_n)}$ and $X_n = \pi_n(X_{G_n})$. We proceed by using the following lemma, whose proof can be found in the appendices. Our result then follows from the fact that inner products are continuous in the product topology they induce.

\begin{lemma}\label{T:eigenvectors_convergence}
Let $C = \{j \in \mbZ \setminus \{0\} \mid |\sigma_j(T_{\ccalW})| \geq c\}$ be the set of indices of the non-vanishing eigenvalues and $\ccalE$ denote the subspace spanned by the eigenfunctions $\{\varphi_j(T_\ccalW)\}_{j \notin C}$. Then, $\varphi_j(T_{\ccalW_n}) \to \varphi_j(T_\ccalW)$ weakly for $j \in C$ and $\varphi_j(T_{\ccalW_n}) \to \phi \in \ccalE$ for $i \notin C$.
\end{lemma}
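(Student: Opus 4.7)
The plan is to leverage the cut-norm convergence $\|\ccalW_n - \ccalW\|_\square \to 0$ (which follows from Def.~\ref{defn:convergent_pairs} via the equivalence between homomorphism density convergence and cut-norm convergence for graphons \cite{lovasz2012large}), together with compactness of $T_\ccalW$ and the non-derogatory hypothesis, to identify the weak subsequential limits of the eigenfunctions. First I will invoke the standard bilinear estimate
\begin{equation*}
\Bigl|\int_{[0,1]^2} K(u,v) f(u) g(v) \, du \, dv\Bigr| \leq 4 \|K\|_\square \|f\|_\infty \|g\|_\infty ,
\end{equation*}
which, applied to $K = \ccalW_n - \ccalW$, yields $T_{\ccalW_n} \to T_\ccalW$ in the weak operator topology on $L^2([0,1])$. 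Eigenvalue convergence $\sigma_j(T_{\ccalW_n}) \to \sigma_j(T_\ccalW)$ for every $j$ is a classical consequence of the same cut-norm bound \cite[Ch.~11]{lovasz2012large}. Since $\|\varphi_j(T_{\ccalW_n})\|_2 = 1$, Banach--Alaoglu reduces matters to identifying any weak subsequential limit $\psi_j \in L^2$.

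For $j \in C$ the key observation is that the eigenfunctions are uniformly bounded in $L^\infty$: for $n$ large $|\sigma_j(T_{\ccalW_n})| \geq c/2$, and the identity $\varphi_j(T_{\ccalW_n}) = \sigma_j(T_{\ccalW_n})^{-1} T_{\ccalW_n}\varphi_j(T_{\ccalW_n})$ together with $\|\ccalW_n\|_\infty \leq 1$ and Cauchy--Schwarz yields $\|\varphi_j(T_{\ccalW_n})\|_\infty \leq 2/c$. Testing the eigenvalue equation against arbitrary $g \in L^\infty$ and using self-adjointness,
\begin{equation*}
\sigma_j(T_{\ccalW_n}) \langle \varphi_j(T_{\ccalW_n}), g\rangle = \langle \varphi_j(T_{\ccalW_n}), T_\ccalW g\rangle + \langle (T_{\ccalW_n} - T_\ccalW)\varphi_j(T_{\ccalW_n}), g\rangle,
\end{equation*}
the last term is bounded by $8\|\ccalW_n-\ccalW\|_\square \|g\|_\infty / c = o(1)$, while the other two converge to $\sigma_j(T_\ccalW)\langle \psi_j, g\rangle$ and $\langle T_\ccalW \psi_j, g\rangle$. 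Hence $T_\ccalW \psi_j = \sigma_j(T_\ccalW) \psi_j$, and the non-derogatory assumption forces $\psi_j = \alpha_j \varphi_j(T_\ccalW)$ for some scalar $\alpha_j$ with $|\alpha_j| \leq 1$.

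To show $|\alpha_j| = 1$, I will compute $\langle T_\ccalW \varphi_j(T_{\ccalW_n}), \varphi_j(T_{\ccalW_n})\rangle$ in two ways. Compactness of $T_\ccalW$ gives $T_\ccalW\varphi_j(T_{\ccalW_n}) \to T_\ccalW\psi_j = \sigma_j(T_\ccalW)\alpha_j\varphi_j(T_\ccalW)$ strongly, so the strong--weak pairing tends to $\sigma_j(T_\ccalW)|\alpha_j|^2$. Swapping $T_\ccalW$ for $T_{\ccalW_n}$ costs only $O(\|\ccalW_n-\ccalW\|_\square (2/c)^2) = o(1)$ by the same bilinear estimate (now with $f=g=\varphi_j(T_{\ccalW_n})$), and $\langle T_{\ccalW_n}\varphi_j(T_{\ccalW_n}),\varphi_j(T_{\ccalW_n})\rangle = \sigma_j(T_{\ccalW_n}) \to \sigma_j(T_\ccalW)$. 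Equating yields $|\alpha_j|^2 = 1$, and by Radon--Riesz (weak convergence plus convergence of norms) the convergence $\varphi_j(T_{\ccalW_n}) \to \varphi_j(T_\ccalW)$ is in fact strong, once signs are chosen consistently.

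For $j \notin C$, I will exploit the orthonormality relation $\langle \varphi_j(T_{\ccalW_n}), \varphi_k(T_{\ccalW_n})\rangle = 0$ for every $k \in C$. Since the previous paragraph establishes strong convergence $\varphi_k(T_{\ccalW_n}) \to \varphi_k(T_\ccalW)$ for each $k \in C$, the strong--weak pairing gives $\langle \psi_j, \varphi_k(T_\ccalW)\rangle = 0$ for all such $k$. Hence $\psi_j$ lies in the orthogonal complement of $\text{span}\{\varphi_k(T_\ccalW)\}_{k \in C}$, which is exactly $\ccalE$. The main obstacle in this plan is securing the uniform $L^\infty$ bound on the large-eigenvalue eigenfunctions: this is what enables the weak cut-norm control of bilinear forms to be applied to the moving functions $\varphi_j(T_{\ccalW_n})$ themselves, and it is also what upgrades the desired weak convergence for $j \in C$ to strong convergence, a fact that is then essential for handling the $j \notin C$ case.
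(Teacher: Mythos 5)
Your proposal is correct, but it follows a genuinely different route from the paper. The paper converts cut-norm convergence into \emph{operator-norm} convergence $\|T_{\ccalW_n}-T_\ccalW\|\to 0$ (via the bound $\|T_\ccalW\|\le\sqrt{8\|\ccalW\|_\square}$, Prop.~\ref{T:norm_equivalence}) and then invokes a Davis--Kahan-type perturbation bound for spectral projections (Prop.~\ref{T:davis_kahan}): for $j\in C$ the non-derogatory gap $\delta_j>0$ gives convergence of the rank-one projections, and a second application to the complementary spectral subspace handles $j\notin C$. You instead stay with the weaker bilinear cut-norm estimate, extract weak limits by Banach--Alaoglu, identify them as eigenfunctions of $T_\ccalW$ by passing to the limit in the weak form of the eigenvalue equation (made possible by the uniform $L^\infty$ bound $\|\varphi_j(T_{\ccalW_n})\|_\infty\le 2/c$, which is the genuinely new ingredient), rule out mass loss through the quadratic-form computation of $\langle T_\ccalW\varphi_j(T_{\ccalW_n}),\varphi_j(T_{\ccalW_n})\rangle$, and upgrade to strong convergence by Radon--Riesz; the $j\notin C$ case then falls out of orthogonality to the strongly convergent $\varphi_k(T_{\ccalW_n})$, $k\in C$, with no second perturbation argument needed. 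What each buys: the paper's route is shorter once the operator-norm bound and Davis--Kahan are granted, and it quantifies the rate in terms of $\|T_{\ccalW_n}-T_\ccalW\|/\delta_j$; your route uses only the soft cut-norm bilinear estimate plus compactness, avoids the slightly delicate uniform gap $d_n>0$ needed for the paper's second Davis--Kahan application, and makes explicit the strong $L^2$ convergence for $j\in C$ (which is in fact what the proof of Theorem~\ref{thm:wft} uses, even though the lemma only claims weak convergence). Both arguments rely on the eigenvalue convergence $\sigma_j(T_{\ccalW_n})\to\sigma_j(T_\ccalW)$, which you cite and the paper proves as Prop.~\ref{T:eigenvalues_convergence}; and both share the standard caveats about fixing eigenfunction signs and passing from subsequential to full-sequence convergence, which you acknowledge.
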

\begin{proof} \renewcommand{\qedsymbol}{}
Refer to the appendices.
\end{proof}
Start by the eigenvectors with indices in $C$. For any $\epsilon > 0$, it holds from Lemma \ref{T:eigenvectors_convergence} and from $X_n \to X$ in $L^2$, that there $\exists\ N_1, N_2$ such that
\begin{align*}
	\|\varphi_j(T_{\ccalW_n}) - \varphi_j(T_\ccalW)\|_{L^2} \leq \frac{\epsilon}{2\|X\|}
		\text{, for all } n > N_1
	\\ \text{and} \quad
	\|X_n - X\|_{L^2} \leq \frac{\epsilon}{2}
		\text{, for all } n > N_2
\end{align*}
where $\|\varphi_j(T_{\ccalW_n})\| \leq 1$ for all $n$ and $j \in C$ because the graphon spectral basis is orthonormal. Since the sequence $\{X_n\}$ is convergent, it is bounded and $\|X\| < \infty$. Let $M = \max(N_1,N_2)$. Then, it holds that
\begin{align*}
	|[\hat{X}_{n}]_j - [\hat{X}]_j|
	{}&\leq \frac{\epsilon}{2} \|\varphi_j(T_{\ccalW_n})\| + \|X\|\frac{\epsilon}{2\|X\|} \leq \epsilon
\end{align*}
%
for all $n > M$, which we obtain by expressing $[\hat{X}_{n}]_j$ and $[\hat{X}]_j$ as inner products and applying the Cauchy-Schwarz inequality once.
The coefficients with indices in $C$ of the GFT of $(G_n,X_n)$ therefore converge to those of the WFT of $(\ccalW,X)$ as $n \to \infty$.

For $j \notin C$, the eigenfunctions $\varphi_j(T_{\ccalW_n})$ may not converge to $\varphi_j(\ccalW)$, but they do converge to some function $\phi \in \ccalE$. Since the graphon signal $(\ccalW,X)$ is $c$-bandlimited, we have $\langle{X},{\varphi_j(T_\ccalW)}\rangle = 0$ for all $j \notin C$.
Using the same argument as above yields that the remaining GFT coefficients also converge to the WFT. 
%
%
\end{proof}

\begin{figure}[t]
\begin{minipage}[t]{1.0\linewidth}
  \centering
  \includegraphics[width=0.9\columnwidth, height=0.55\columnwidth]{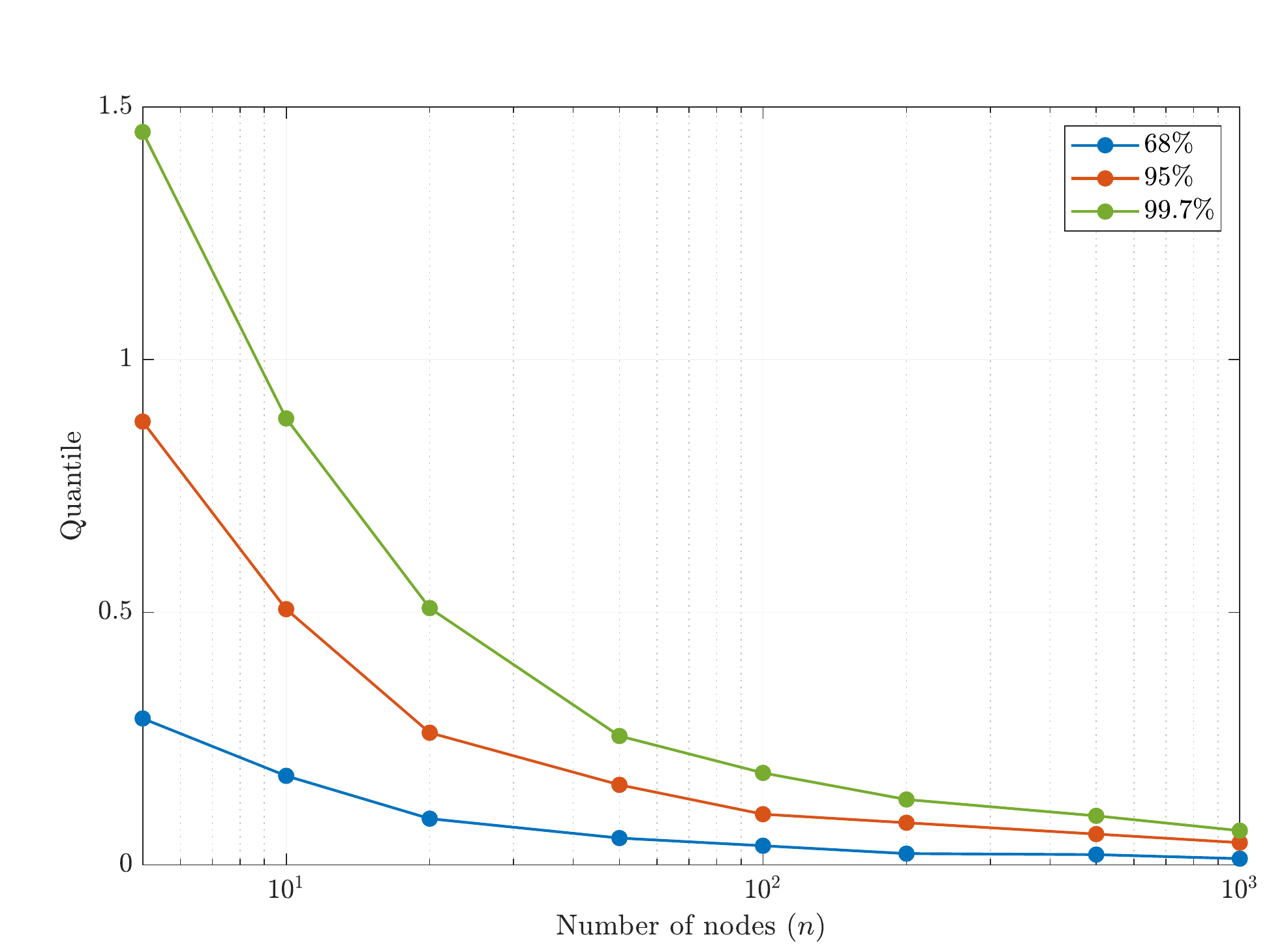}
\end{minipage}
\caption{Quantiles ($68\%, 95\%, 99.7\%$) of the minimum normalized difference between GFTs of air pollution signals on $G_1$ and $G_2$ over 50 iterations for increasing $n$.}
  \label{fig:sensor_net}
\end{figure}

\section{Numerical Experiments}
\label{sec:sims}

\subsection{Convergence of air pollution signal spectra}

We compare the spectral representations of air pollution signals measured on two distinct sensor networks of same size and drawn from the same random graph model, and show that the GFTs of the pollution signals on both networks converge as the number of nodes $n$ increases. This problem can be seen as comparing the spectra of air pollution signals in two cities.  
Two air pollution sensor networks are drawn from a soft random geometric graph model \cite{penrose2016connectivity} where, given nodes $i,j$ and their $(x,y)$ coordinates, the probability of the edge $(i,j)$ is
\begin{equation} \label{eqn:srgg}
p(i,j) \propto \exp \big(-\beta \sqrt{(x_i-x_j)^2 + (y_i-y_j)^2}\big)\ .
\end{equation}
Assuming that we only care about pollutant dispersion in the cross-wind direction $y \geq 0$, the $x$ coordinate is fixed so that eq. \eqref{eqn:srgg} only depends on $y_i$ and $y_j$. We also normalize $y_i$ and $y_j$ by setting $u_i=y_i/y_{\mbox{\tiny max}}$ and $u_j=y_j/y_{\mbox{\tiny max}}$ so that eq. \eqref{eqn:srgg} fits the expression of a graphon. 

The pollution dispersion model is given by $S(y) \propto \exp (-{y^2}/{2\sigma_y^2})$, where we have fixed the source of pollution at $y=0$ and $\sigma_y^2$ is a cross-wind mixing term \cite{arya1999air}. 
By normalizing $y$ like before as $u = y/y_{\mbox{\tiny max}}$, $S(u)$ can be interpreted as a signal on the graphon of eq. \ref{eqn:srgg}. 

For a range of values of $n$, we sample two $n$-node graphs $G_1$ and $G_2$ at locations $\{u^1_i\}$ and $\{u^2_i\}$ drawn i.i.d from $U[0,1]$. Then, we generate graph signals $(G_1,\bbs_1)$ and $(G_2,\bbs_2)$ by evaluating $[\bbs_1]_i = S(u^1_i)$ and $[\bbs_2]_i = S(u^2_i)$ for $1 \leq i \leq n$.
Because of the way in which $G_1$ and $G_2$ and $\bbs_1$ and $\bbs_2$ are sampled, following Theorem \ref{thm:wft} the GFTs of $(G_1,\bbs_1)$ and $(G_2,\bbs_2)$ should both converge to the WFT of $S$ with probability 1. We verify this empirically by computing the GFTs of $(G_1,\bbs_1)$ and $(G_2,\bbs_2)$ and comparing their minimum norm difference.

The $68\%$, $95\%$ and $99.7\%$ quantile curves of the GFT norm difference for $50$ realizations of this experiment are plotted in Figure \ref{fig:sensor_net}. The minimum norm difference was computed by sorting and subtracting $\hbs_1$ and $\hbs_2$, and the experiment was run for graphs of increasing size $n$. Figure \ref{fig:sensor_net} shows that the GFT converges as expected: across realizations, the minimum norm difference of the GFTs gets more and more concentrated around $0$ as $n$ increases.

\subsection{GFT transferability in recommender systems}

In this experiment, we use the MovieLens dataset \cite{harper16-movielens} with 100,000 ratings from 943 users to 1,682 movies to analyze the behavior of the GFT on user similarity networks of increasing size. The networks are built by computing Pearson correlations between ratings that users have given to movies, in the fashion of \cite{weiyu18-movie}. The ratings vary between 1 and 5 and are seen as signals on the user network. We consider the movie ``Toy Story'', and use as the reference graph signal $\bbx$ the user ratings predicted for this movie on the full 943-user network and using the method in \cite{weiyu18-movie}. 

The experiment is as follows: we (i) generate a graph from a pool of users of size $n$, (ii) compute its spectral basis, (iii) sample the graph signal $\bbx_n$ from $\bbx$ and (iv) compare the GFT of $\bbx_n$ in the $n$-node user network, $[\hbx_n]_{i=1}^n$, with the GFT coefficients $[\hbx]_{i=1}^n$  of the complete signal $\bbx$ associated with the $n$ largest eigenvalues of the full 943-user network. Although these are networks built from real data to which we cannot attribute a generating graphon, our goal here is to illustrate how our results can be implicitly observed and used to leverage transferability even in graphs that are not related by a common statistical model, but that are ``similar'' in some empirical sense.

The average relative norm difference between the GFTs and its standard deviation are shown in Figure \ref{fig:movie_coeffs} for 10 realizations of networks with $n$ users each. We see that the GFT difference consistently decreases with $n$. Perhaps more importantly, notice that this difference is as low as $0.5\%$ on average for networks as small as $n=200$. In other words, the spectral decomposition of the signal corresponding to ``Toy Story'' ratings in a network with less than a quarter of the users of the full network is already an accurate enough representation of the complete spectra of this signal.

\begin{figure}[t]
\begin{minipage}[t]{1.0\linewidth}
  \centering
  \includegraphics[width=0.9\columnwidth, height=0.55\columnwidth]{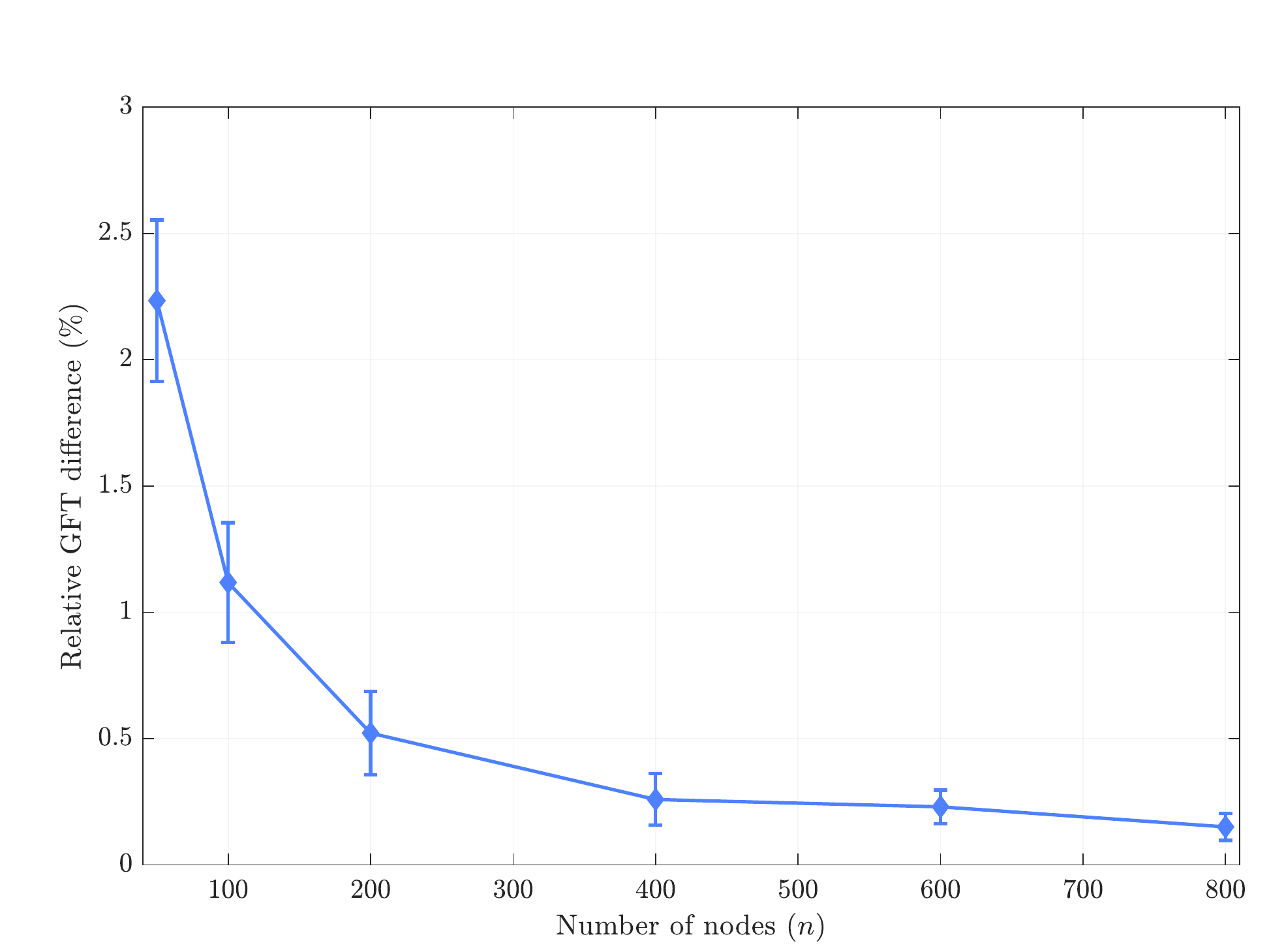}
\end{minipage}
\caption{Relative norm difference of the GFTs computed using the spectral basis of the $n$-node network and of the full 943-node network. Only the GFT coefficients associated with the first $n$ eigenvalues were considered.}
  \label{fig:movie_coeffs}
\end{figure}



\section{Conclusions}
\label{sec:conclusions}

In this work, we have introduced a Fourier transform for graphon signals and shown that, for convergent sequences of graph signals, the GFT converges to the WFT. This result is empirically demonstrated in the numerical experiments of Section \ref{sec:sims}. The definition of graphon signals and of the WFT, together with Theorem \ref{thm:wft}, lay the ground for a graphon signal processing framework on which signals can be analyzed and systems designed in a \textit{centralized} and \textit{transferable} way, to then be applied to sequences of graphs or multiple instances of the same type of graph. 

\appendix

\section{Proof of Lemma 1}

The proof follows by direct computation. For $j \in L$, we have
\begin{align*}
	T_{\ccalW_G}&(\varphi_j)(u) = \int_0^1 \ccalW_G(u,v) \varphi_j(v) dv
	\\
	{}&= \sqrt{N} \mbI\left( u \in I_k \right) \int_0^1 [\bbS]_{k\ell} [\bbv_j]_{k} \times 
		\mbI\left( v \in I_\ell \right) dv
	\\
	{}&= \sqrt{N} \mbI\left( u \in I_k \right) \sum_{\ell = 1}^{N} [\bbS]_{k\ell} [\bbv_j]_{k} \int_{I_\ell} dv
	\\
	{}&= \frac{[\bbS \bbv_j]_{k}}{N} \times \sqrt{N} \mbI\left( u \in I_k \right)
		= \frac{\lambda_j(\bbS)}{N}
			\bigg[ [\bbv_j]_{k} \times \sqrt{N}\mbI\left( u \in I_k \right) \bigg]
	\\
	{}&= \sigma_j(T_{\ccalW_G}) \varphi_j(u)
		\text{.}
\end{align*}
If $j \notin L$, then $\langle{\varphi_j},{\varphi_k}\rangle = 0$ for all $k \in L$. In this case, we can trivially write $T_{\ccalW_G}(\varphi_j) = 0 = \sigma_j(T_{\ccalW_G}) \varphi_j(u)$. Note that since the $\bbv_k$ are orthonormal, so are the $\{\varphi_k(T_{\ccalW_G})\}$ and therefore a basis completion $\{\varphi_j\}$ can always be obtained. To conclude, compute for $j \in L$
\begin{align*}
	[\hat{X}]_j &= \int_0^1 \varphi_j(v) X(v) dv
	\\
	{}&= \sqrt{N} \int_0^1 [\bbv_j]_{\ell} [\bbx]_{\ell}
		\times \mbI\left( v \in I_\ell \right) dv
	\\
	{}&= \sqrt{N} \sum_{\ell = 1}^{N} [\bbv_j]_{\ell} [\bbx]_{\ell} \int_{I_\ell} dv
		= \dfrac{\bbv_j^\Tr \bbx}{\sqrt{N}} = \dfrac{[\hbx]_j}{\sqrt{N}}
		\text{.}
\end{align*}
If $j \notin L$, recall that since the $\{\bbv_j\}$ form a basis of $\reals^{N}$, we can write $\bbx = \sum_{k \in L} c_k \bbv_k$. Hence,
\begin{align*}
	[\hat{X}]_j &= \int_0^1 \varphi_j(v) X(v) dv
	\\
	{}&= \int_0^1 [\bbx]_{\ell} \times \mbI\left( v \in I_\ell \right)\varphi_j(v) dv
	\\
	{}&= \int_0^1 \sum_{k \in L} c_k [\bbv_k]_{\ell} \times \mbI\left( v \in I_\ell \right)\varphi_j(v) dv
	\\
	{}&= \dfrac{1}{\sqrt{N}} \sum_{k \in L} c_k \int_0^1 \varphi_k(v) \varphi_j(v) dv = 0
		\text{.} \qed
\end{align*}

\section{Proof of Lemma 2}

To prove Lemma \ref{T:eigenvectors_convergence}, we will need the following proposition that essentially shows that the spectrum of a convergent graph sequence tends to the spectrum of it limit graphon.

\begin{proposition}\label{T:eigenvalues_convergence}
Let $\{G_n\}$ be a sequence of graphs converging (in the sense of Definition \ref{defn:convergent_pairs}) to the non-derogatory graphon $\ccalW$ and denote by $\{\ccalW_n\}$ their induced graphons. Then, $\sigma_i(T_{\ccalW_n}) \to \sigma_i(T_\ccalW)$ for all $i \in \mbZ \setminus \{0\}$.
\end{proposition}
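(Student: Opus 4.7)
My strategy is to convert the homomorphism-density convergence hypothesis into cut-metric convergence of the induced graphons, and then invoke known spectral continuity results for graphon operators.

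First, I would translate Def.~\ref{defn:convergent_pairs} into a metric statement. The classical equivalence theorem of Borgs, Chayes, Lov\'asz, S\'os, and Vesztergombi (Lov\'asz, \textit{Large Networks and Graph Limits}, Theorem 11.5) states that for a sequence $\{G_n\}$ of finite graphs and a graphon $\ccalW$, the condition $t(F,G_n) \to t(F,\ccalW)$ for every finite simple $F$ is equivalent to cut-metric convergence $\delta_\square(\ccalW_n, \ccalW) \to 0$ of the induced graphons. Since the cut distance is defined as an infimum over measure-preserving rearrangements, the optimizing labellings can be absorbed into the permutation sequence $\{\pi_n\}$ provided by Def.~\ref{defn:convergent_pairs}, so that after relabelling we have $\|\ccalW_n - \ccalW\|_\square \to 0$.

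Second, I would apply spectral continuity for graphon operators. Because the kernels $\ccalW_n$ and $\ccalW$ take values in $[0,1]$, each $T_{\ccalW_n}$ is a Hilbert--Schmidt operator and the whole sequence has uniformly bounded Hilbert--Schmidt norm. In this setting, the spectral continuity theorem for graphons (Lov\'asz, Theorem 11.54) guarantees that the ordered spectrum is continuous in cut distance: whenever $\ccalW_n \to \ccalW$ in the cut metric, $\sigma_j(T_{\ccalW_n}) \to \sigma_j(T_\ccalW)$ for every fixed index $j$. The argument combines the min--max characterization of eigenvalues with a Hoffman--Wielandt-type perturbation bound; the delicate point is that cut-norm convergence is sufficient even though it is strictly weaker than $L^2$ convergence, which is precisely what is gained from uniform boundedness of the kernels.

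Third, I would invoke the non-derogatory hypothesis to pin down the indexing. Spectral continuity under the cut metric in general only controls ordered eigenvalue sequences, so repeated eigenvalues in the limit would leave ambiguity as to which eigenvalue of $T_{\ccalW_n}$ corresponds to which index. The assumption that $\ccalW$ is non-derogatory (Def.~\ref{defn:non-derog}) guarantees that $\sigma_i(T_\ccalW) \neq \sigma_j(T_\ccalW)$ for $i \neq j$, so the signed-magnitude ordering of Section~\ref{sbs:wft} is unambiguous and the convergence holds pointwise in each index $i \in \mbZ \setminus \{0\}$.

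The main obstacle I anticipate is verifying that the permutations from Def.~\ref{defn:convergent_pairs} are compatible with the rearrangements realizing cut-norm (not merely cut-distance) convergence, and handling the separation of positive and negative parts of the spectrum when invoking the continuity theorem, whose standard statement orders eigenvalues by magnitude rather than by signed magnitude as done here.
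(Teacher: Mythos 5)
Your proposal is correct in substance, but it reaches the conclusion by a different route than the paper. You convert homomorphism-density convergence into cut-metric convergence and then cite the spectral continuity theorem for graphons as a black box, whereas the paper reproduces the moment-method proof of \cite[Theorem 6.7]{borgs2012convergent}: it uses the identity $t(C_k,\ccalW') = \sum_{i} \sigma_i(T_{\ccalW'})^k$ to turn density convergence for cycles directly into convergence of spectral power sums, derives the uniform tail bound $|\sigma_i| \leq (B/|i|)^{1/4}$ from $t(C_4,\cdot)$, passes to the limit term-by-term for $k \geq 5$, and finishes with an induction over the magnitude-ordered eigenvalues that matches multiplicities via even and odd $k$. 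Your packaging is shorter and modular, and it dovetails with the machinery the paper already deploys for the eigenfunctions (cut-norm convergence plus the bound $\|T_\ccalW\| \leq \sqrt{8\|\ccalW\|_\square}$ would even give you operator-norm convergence and hence eigenvalue convergence by a Weyl-type inequality); the paper's version is self-contained and makes the mechanism explicit, namely that cycle densities are exactly the spectral moments. Two of your anticipated difficulties dissolve on inspection: the spectrum of an induced graphon is invariant under relabelings (indeed under any measure-preserving rearrangement), so convergence in the cut distance $\delta_\square$ suffices for eigenvalue convergence and no compatibility between the permutations $\pi_n$ and the optimal alignments needs to be checked; and the non-derogatory hypothesis is not actually needed for this proposition --- both the continuity theorem you cite and the paper's reproduced argument handle repeated eigenvalues, with the signed-magnitude indexing already resolved by treating positive and negative eigenvalues separately --- it only becomes essential later, in Lemma \ref{T:eigenvectors_convergence}, to get the eigenvalue gaps used in the Davis--Kahan step.
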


\begin{proof}
The proof is essentially the one for \cite[Theorem 6.7]{borgs2012convergent}, but we reproduce it here using our notation.

Recall that since the sequence $\{G_n\}$ converges to $\ccalW$, the density of homomorphisms for any finite graph also converges. The result then follows by choosing a homomorphism connected to the eigenvalues of their induced operators, namely the $k$-cycle $C_k$. Indeed, notice that for any graphon $\ccalW^\prime$ and $k \geq 2$, we have, by definition, that $t(C_k,\ccalW^\prime) = \sum_{i \in \mbZ \setminus \{0\}} \sigma_i(T_{\ccalW^\prime})^k$. Hence, we obtain that
\begin{equation}\label{E:k_cycle}
	\lim_{n \to \infty} \sum_{i \in \mbZ \setminus \{0\}} \sigma_i(T_{\ccalW_n})^k =
		\sum_{i \in \mbZ \setminus \{0\}} \sigma_i(T_\ccalW)^k
		\text{, for } k \geq 2
		\text{.}
\end{equation}
It now suffices to show that \eqref{E:k_cycle} implies $\sigma_i(T_{\ccalW_n}) \to \sigma_i(T_\ccalW)$.

We start by bounding the eigenvalues of any graphon $\ccalW^\prime$ in terms of its density of homomorphisms. In particular, for $k = 4$ we obtain that
\begin{align*}
	\sum_{i = 1}^m \sigma_i(T_{\ccalW^\prime})^4
		\leq \sum_{i \in \mbZ \setminus \{0\}} \sigma_i(T_{\ccalW^\prime})^4 = t(C_4,\ccalW^\prime)
	\Rightarrow
	\\
	\sigma_m(T_{\ccalW^\prime}) \leq \left[ \frac{t(C_4,\ccalW^\prime)}{m} \right]^{1/4}
		\text{ and}
	\\
	\sum_{i = -m}^{-1} \sigma_i(T_{\ccalW^\prime})^4
		\leq \sum_{i \in \mbZ \setminus \{0\}} \sigma_i(T_{\ccalW^\prime})^4 = t(C_4,\ccalW^\prime)
	\Rightarrow
	\\
	\sigma_{-m}(T_{\ccalW^\prime}) \geq -\left[ \frac{t(C_4,\ccalW^\prime)}{m} \right]^{1/4}
		\text{.}
\end{align*}
Since $t(C_4,\ccalW_n)$ is a convergent sequence, it has a bound $B$ \cite{borgs2012convergent}, which implies that
\begin{equation}\label{E:eigenvalue_bound}
	|{\sigma_i(T_{\ccalW^\prime})}| \leq \left( \frac{B}{|i|} \right)^{1/4}
		\text{, for all } i \in \mbZ \setminus \{0\}
		\text{.}
\end{equation}
Then, observe that for $k \geq 5$, we can take the limit in \eqref{E:k_cycle} term-by-term since $|{\sigma_i(T_{\ccalW_n})^k}| \leq (B/|i|)^{k/4}$ and the series $\sum_i (B/|i|)^{k/4}$ is convergent for $k > 4$. We therefore obtain from \eqref{E:k_cycle} that
\begin{equation}\label{E:limit_termbyterm}
	\lim_{n \to \infty} \sum_{i \in \mbZ \setminus \{0\}} \sigma_i(T_{\ccalW_n})^k
		= \sum_{i \in \mbZ \setminus \{0\}} \mu_i^k
		= \sum_{i \in \mbZ \setminus \{0\}} \sigma_i(T_\ccalW)^k
		\text{, for } k \geq 5
		\text{,}
\end{equation}
where $\mu_i^k = \lim_{n \to \infty} \sigma_i(T_{\ccalW_n})^k$.

To conclude, we proceed by induction over an ordering of the sequence of eigenvalues $\sigma_i(T_\ccalW)$, namely over $i_\ell$, $\ell = 1,2,\dots$, such that $|{\sigma_{i_1}(T_\ccalW)}| \geq |{\sigma_{i_2}(T_\ccalW)}| \geq \dots \geq |{\sigma_{i_\ell}(T_\ccalW)}|$. Suppose that $\mu_{i_\ell} = \sigma_{i_\ell}(T_\ccalW)$ for $\ell < \ell^*$ and let $\sigma_{i_{\ell^*}}(T_\ccalW)$ be of multiplicity $a$ and appear $b$ times in the sequence $\{\mu_i\}$ and $-\sigma_{i_{\ell^*}}(T_\ccalW)$ be of multiplicity $a^\prime$ and appear $b^\prime$ times in $\{\mu_i\}$. The identity in \eqref{E:limit_termbyterm} then reduces to
\begin{align}
	\left[ b + (-1)^k b^\prime \right]
		+ \sum_{\ell > \ell^*} \left( \frac{\mu_{i_{\ell}}}{\sigma_{i_{\ell^*}}(T_\ccalW)} \right)^k =
	\\
	\left[ a + (-1)^k a^\prime \right]
		+ \sum_{\ell > \ell^*} \left( \frac{\sigma_{i_{\ell}}(T_\ccalW)}{\sigma_{i_{\ell^*}}(T_\ccalW)} \right)^k
		\text{, for } k \geq 5
		\text{,}
\end{align}
where we divided both sides by $\sigma_{i_{\ell^*}}(T_\ccalW)^k$. Due to the ordering of the $\sigma_{i_\ell}$, for $k \to \infty$ through the even numbers we get $b + b^\prime = a + a^\prime$ and through the odd numbers we get $b - b^\prime = a - a^\prime$. Immediately, we have that $a = a^\prime$ and $b = b^\prime$, so that $\mu_{i_{\ell^*}} = \sigma_{i_{\ell^*}}$. Although this argument assumes $\mu_{i_\ell} < \sigma_{i_{\ell^*}}$ for all $\ell > \ell^*$, applying the same procedure to an ordering of the sequence $\{\mu_i\}$ yields the same conclusion.
\end{proof}

We will also require the following well known result about the perturbation of self-adjoint operators. For $\gamma$ a subset of the eigenvalues of a self-adjoint operator $T$, define the spectral projection~$E_T(\gamma)$ as the projection onto the subspace spanned by the eigenfunctions relative to those eigenvalues in $\gamma$. Then,

\begin{proposition}\label{T:davis_kahan}
Let $T$ and $T^\prime$ be two self-adjoint operators on a separable Hilbert space $\ccalH$ whose spectra are partitioned as $\gamma \cup \Gamma$ and $\omega \cup \Omega$ respectively, with $\gamma \cap \Gamma = \emptyset$ and $\omega \cap \Omega = \emptyset$. If there exists $d > 0$ such that $\min_{x \in \gamma,\, y \in \Omega} |{x - y}| \geq d$ and $\min_{x \in \omega,\, y \in \Gamma}|{x - y}| \geq d$, then
\begin{equation}\label{E:davis_kahan}
	\|E_T(\gamma) - E_{T^\prime}(\omega)\| \leq \frac{\pi}{2} \frac{\|{T - T^\prime}\|}{d}
\end{equation}
\end{proposition}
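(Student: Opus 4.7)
The plan is to prove this perturbation bound by reducing it to a Sylvester-type operator equation governed by the spectral gap $d$, and then upgrading to the sharp $\pi/2$ constant via functional calculus. Throughout, let $P := E_T(\gamma)$ and $Q := E_{T'}(\omega)$, with complementary projections $P^\perp = E_T(\Gamma)$ and $Q^\perp = E_{T'}(\Omega)$. By the spectral theorem for self-adjoint operators, $P$ and $P^\perp$ commute with $T$, and $Q$ and $Q^\perp$ commute with $T'$.

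The first step is to derive a Sylvester equation for $Z := P^\perp Q$. Adding and subtracting $P^\perp T Q$ and using the commutation relations above, one obtains
\begin{equation*}
   T Z - Z T' \;=\; P^\perp T Q - P^\perp Q T' \;=\; P^\perp (T - T') Q .
\end{equation*}
Because the restriction of $T$ to $\operatorname{ran}(P^\perp)$ has spectrum contained in $\Gamma$ and the restriction of $T'$ to $\operatorname{ran}(Q)$ has spectrum contained in $\omega$, the gap hypothesis $\operatorname{dist}(\Gamma,\omega)\geq d$ makes the Sylvester operator $X \mapsto TX - XT'$ boundedly invertible on this block, and classical Rosenblum/Bhatia estimates give $\|Z\| \leq \|T-T'\|/d$. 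Exchanging the roles of $(T,\gamma)$ and $(T',\omega)$ and using the symmetric gap condition $\operatorname{dist}(\omega,\Gamma)\geq d$ yields the companion bound $\|P Q^\perp\| \leq \|T-T'\|/d$.

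The second step is to pass from these ``one-sided'' bounds to the difference $P - Q$. Writing
\begin{equation*}
   P - Q \;=\; P(I-Q) - (I-P)Q \;=\; P Q^\perp - P^\perp Q,
\end{equation*}
and observing that $P Q^\perp$ and $P^\perp Q$ map into orthogonal subspaces (their ranges lie in $\operatorname{ran}(P)$ and $\operatorname{ran}(P^\perp)$ respectively), the operator norm of the sum equals $\max(\|P Q^\perp\|,\|P^\perp Q\|)$. This already gives the qualitative bound $\|P-Q\| \leq \|T-T'\|/d$, which is the classical Davis--Kahan $\sin\Theta$ inequality in operator norm.

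The main obstacle, and the only delicate point, is upgrading the leading constant to the stated $\pi/2$. For this I would invoke the sharp Davis--Kahan functional-calculus argument: choose an odd, bounded function $f$ that equals $+1$ on $\gamma$ and $-1$ on $\Omega$ (and analogously for the symmetric pair), with Lipschitz constant $\pi/(2d)$ on the relevant spectrum (this is the optimal sawtooth / arctangent function). Applying the Lipschitz functional calculus bound $\|f(T)-f(T')\| \leq \operatorname{Lip}(f)\,\|T-T'\|$ and relating $f(T)-f(T')$ to $P - Q$ via spectral projections yields the displayed inequality \eqref{E:davis_kahan}. The strategy is therefore clear; the only subtlety is selecting the correct test function realizing the optimal constant, which is the standard route in the self-adjoint Davis--Kahan literature and which I would cite rather than reconstruct in detail.
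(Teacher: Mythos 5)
The paper does not prove this proposition at all; it simply cites Seelmann's notes, and the skeleton you set up (the Sylvester equation $TZ - ZT' = P^\perp(T-T')Q$ for $Z = P^\perp Q$, the companion block $PQ^\perp$, and the identity $\|P-Q\| = \max\left(\|PQ^\perp\|, \|P^\perp Q\|\right)$) is exactly the route taken in that reference. The algebra in your first two steps is sound. The problem is the quantitative core.

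The claim that ``classical Rosenblum/Bhatia estimates give $\|Z\| \leq \|T-T'\|/d$'' is false under the stated hypotheses. The constant-$1$ solution bound for $AX - XB = Y$ requires the two spectral sets to be \emph{subordinated} or separated by an annulus (one set in an interval, the other outside its $d$-enlargement); here the proposition only assumes $\mathrm{dist}(\gamma,\Omega)\geq d$ and $\mathrm{dist}(\omega,\Gamma)\geq d$, so the sets may interlace (e.g.\ $\Gamma$ having points on both sides of $\omega$), and in that generality the optimal constant in the self-adjoint Sylvester bound is precisely $\pi/2$ (McEachin's theorem), not $1$. This also exposes a logical inversion in your plan: since $\pi/2 > 1$, there is nothing to ``upgrade'' --- if your intermediate bound $\|P-Q\|\leq \|T-T'\|/d$ were true, the proposition would follow trivially, and the fact that the literature states $\pi/2$ is exactly because the constant-$1$ bound fails here. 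Finally, the proposed repair via functional calculus does not work as stated: the inequality $\|f(T)-f(T')\| \leq \mathrm{Lip}(f)\,\|T-T'\|$ is invalid for general Lipschitz $f$ (Lipschitz functions need not be operator Lipschitz), so one cannot obtain the sharp constant by picking a sawtooth/sign-like test function and invoking a generic Lipschitz functional-calculus bound. The correct way to close your argument is to replace the constant-$1$ claim by the sharp self-adjoint Sylvester estimate $\|Z\| \leq \frac{\pi}{2}\,\|T-T'\|/d$ (applied to both blocks, using the two cross-gap hypotheses), after which your $\max$ identity immediately yields \eqref{E:davis_kahan}; this is the argument in the cited notes.
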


\begin{proof}
See \cite{seelmann2014notes}.
\end{proof}

Lastly, we also need two final results related to the graphon norm. The first, presented in Lemma \ref{cut_norm_conv}, states that if a sequence of graphs converges to a graphon in the homomorphism density sense, it also converges in cut norm, where the cut norm of a graphon $\ccalW: [0,1]^2 \to [0,1]$ is defined as \cite{lovasz2012large}
\begin{equation}
\|\ccalW\|_\square := \sup_{S,T \subseteq [0,1]} \bigg| \int_{S \times T} \ccalW(u,v)du dv \bigg|. 
\end{equation}

The second, here presented as Proposition \ref{T:norm_equivalence}, is due to \cite{lovasz2012large} and bounds the $L^2$ norm of the graphon operator by is cut norm.

\begin{lemma}[Cut norm convergence \cite{lovasz2012large}] \label{cut_norm_conv}
If $\{G_n\}$ is a sequence of graphs converging to the graphon $\ccalW$ in the homomorphism density sense, then the there exists a sequence of permutations $\{\pi_n\}$ such that
\begin{equation}
\|\ccalW_{\pi_n(G_n)} - \ccalW\|_\square \to 0
\end{equation}
where $\ccalW_{G_n}$ is the graphon induced by the graph $G_n$.
\end{lemma}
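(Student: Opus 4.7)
The plan is to deduce the statement from the classical equivalence between \emph{left-convergence} (convergence of $t(F,G_n) \to t(F,\ccalW)$ for every finite simple motif $F$) and convergence in the \emph{cut distance} $\delta_\square(\ccalW_1,\ccalW_2) := \inf_\pi \|\ccalW_1 - \ccalW_2^\pi\|_\square$, where the infimum runs over measure-preserving bijections $\pi$ of $[0,1]$ and $\ccalW_2^\pi(u,v) := \ccalW_2(\pi(u),\pi(v))$. Once $\delta_\square(\ccalW_{G_n},\ccalW)\to 0$ is established, the required vertex permutations $\pi_n$ follow by choosing, for each $n$, a bijection attaining the infimum up to $1/n$ and discretizing it to a genuine vertex permutation using the step-graphon structure of $\ccalW_{G_n}$.

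First, I would establish compactness of the space of graphons modulo measure-preserving rearrangements under $\delta_\square$. The standard route is the weak Szemer\'edi regularity lemma for graphons: every graphon is approximable in cut norm by a step graphon on $k$ equal intervals, where $k$ depends only on the approximation error. A diagonal argument then extracts, from any subsequence of $\{\ccalW_{G_n}\}$, a further subsequence converging in $\delta_\square$ to some limit graphon $\ccalW'$.

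Next, I would invoke the \emph{counting lemma} to bridge the two convergence notions: $\delta_\square(\ccalW_1,\ccalW_2) \leq \epsilon$ implies $|t(F,\ccalW_1) - t(F,\ccalW_2)| \leq |E(F)|\,\epsilon$ for every simple motif $F$. Applied to the convergent subsequence this yields $t(F,\ccalW_{G_n}) \to t(F,\ccalW')$, but by hypothesis the left side also tends to $t(F,\ccalW)$. Hence $\ccalW$ and $\ccalW'$ have identical homomorphism densities over all finite motifs, and the inverse counting lemma forces $\delta_\square(\ccalW,\ccalW')=0$. Since every subsequence of $\{\ccalW_{G_n}\}$ therefore admits a further subsequence converging in $\delta_\square$ to $\ccalW$, the full sequence does.

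The main obstacle is the inverse counting direction---showing that matching homomorphism densities force small cut distance. This is the deep content of the Borgs--Chayes--Lov\'asz--S\'os--Vesztergombi theorem; it is proved by a sampling/second-moment argument that controls the cut distance between a graphon and a random sub-graphon it induces, combined with Szemer\'edi-type regularity to pass from finitely many motif densities to the full cut norm. Compactness and the forward counting lemma are comparatively routine once weak regularity is in hand, and the final discretization from a measure-preserving bijection on $[0,1]$ back to a vertex permutation is an essentially bookkeeping step exploiting the piecewise-constant structure of $\ccalW_{G_n}$.
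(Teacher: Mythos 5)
Your outline is correct and is essentially the argument behind the result the paper invokes: the paper does not prove this lemma itself but cites \cite[Theorem 11.57]{lovasz2012large}, and the chain you describe (weak regularity giving compactness of the graphon space under $\delta_\square$, the counting lemma, the inverse counting lemma of Borgs--Chayes--Lov\'asz--S\'os--Vesztergombi, then a subsequence argument) is exactly the proof given there. The one point worth flagging is that converting a near-optimal measure-preserving bijection of $[0,1]$ into a genuine vertex permutation of $G_n$ with vanishing additional cut-norm error is a known but not purely bookkeeping step; it is also handled in the cited reference rather than being immediate from the step-function structure alone.
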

\begin{proof}
See \cite[Theorem 11.57]{lovasz2012large}.
\end{proof}

\begin{proposition}\label{T:norm_equivalence}
Let $T_\ccalW$ be the operator induced by the graphon $\ccalW$. Then, $\|{\ccalW}\|_\square \leq \|{T_\ccalW}\| \leq \sqrt{8 \|{\ccalW}\|_\square}$.
\end{proposition}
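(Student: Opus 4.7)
The plan is to establish the two inequalities separately, both exploiting the variational characterization $\|T_{\ccalW}\| = \sup_{\|f\|_{L^2},\|g\|_{L^2} \leq 1} |\langle T_{\ccalW} f, g\rangle|$ and the definition of the cut norm.

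For the lower bound $\|\ccalW\|_\square \leq \|T_{\ccalW}\|$, I would simply feed indicators into the operator norm. Given any measurable $S, T \subseteq [0,1]$, the functions $f = \mathbf{1}_S$ and $g = \mathbf{1}_T$ satisfy $\|f\|_{L^2} = \sqrt{|S|} \leq 1$ and $\|g\|_{L^2} = \sqrt{|T|} \leq 1$, so
$$\Big|\int_{S \times T} \ccalW(u,v)\, du\, dv\Big| = |\langle T_{\ccalW} \mathbf{1}_S, \mathbf{1}_T\rangle| \leq \|T_{\ccalW}\| \|f\|_{L^2}\|g\|_{L^2} \leq \|T_{\ccalW}\|,$$
and taking the supremum over $S, T$ yields the claim.

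For the upper bound $\|T_{\ccalW}\| \leq \sqrt{8\|\ccalW\|_\square}$, the plan is a two-step interpolation argument. First, I would promote the cut-norm control from indicators to bounded functions, showing that for any measurable $\phi, \psi: [0,1] \to [-1,1]$, $|\int\!\int \ccalW(u,v)\phi(u)\psi(v)\, du\, dv| \leq 4\|\ccalW\|_\square$. The key device is the signed layer-cake identity $\phi(u) = \int_0^1 (\mathbf{1}_{\{\phi > t\}}(u) - \mathbf{1}_{\{\phi < -t\}}(u))\, dt$ (and analogously for $\psi$); plugging this in and applying Fubini expresses the bilinear form as the $ds\, dt$-integral over $[0,1]^2$ of four signed cut integrals of the type $\int_{A \times B} \ccalW$, each bounded by $\|\ccalW\|_\square$. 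This is exactly the statement $\|T_{\ccalW}\|_{L^\infty \to L^1} \leq 4\|\ccalW\|_\square$. Combining this with the trivial endpoint estimate $\|T_{\ccalW}\|_{L^1 \to L^\infty} \leq \|\ccalW\|_\infty \leq 1$ and applying Riesz--Thorin interpolation at $\theta = 1/2$ between $(p_0, q_0) = (\infty, 1)$ and $(p_1, q_1) = (1, \infty)$ yields $\|T_{\ccalW}\|_{L^2 \to L^2} \leq \sqrt{4\|\ccalW\|_\square \cdot 1} = 2\sqrt{\|\ccalW\|_\square} \leq \sqrt{8\|\ccalW\|_\square}$.

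The main obstacle is justifying the interpolation step cleanly in the real, symmetric setting and pinning down the precise constant. If one wishes to avoid invoking Riesz--Thorin as a black box, an alternative is an elementary truncation route: for $f \in L^2$ with $\|f\|_{L^2} \leq 1$, split $f = f\mathbf{1}_{\{|f| \leq M\}} + f\mathbf{1}_{\{|f| > M\}}$, apply the $L^\infty \to L^1$ bound to the bounded piece and control the tail via Markov's inequality, then optimize over $M$. In either route the decisive technical ingredient is the layer-cake estimate $\|T_{\ccalW}\|_{\infty \to 1} \leq 4\|\ccalW\|_\square$, from which the factor of $8$ in the statement descends by interpolation.
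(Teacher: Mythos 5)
Your main route is correct, but it is genuinely different from the paper's argument. The paper never touches the $L^\infty \to L^1$ norm or interpolation: it cites a norm-comparison result of Borgs et al.\ together with the identity $t(C_2,\ccalW) = \|T_\ccalW\|_{\mathrm{HS}}^2$ and the fact that the Hilbert--Schmidt norm dominates the $L^2$-induced operator norm, i.e.\ the upper bound is obtained through the homomorphism density of the $2$-cycle. Since the graphons here take values in $[0,1]$, that route can even be made self-contained in one line, $\|T_\ccalW\|^2 \leq \|T_\ccalW\|_{\mathrm{HS}}^2 = \int_{[0,1]^2}\ccalW^2 \leq \int_{[0,1]^2}\ccalW \leq \|\ccalW\|_\square$, which gives the stronger bound $\sqrt{\|\ccalW\|_\square}$; the factor $8$ in the statement comes from the cited result, which covers general signed kernels, where the Hilbert--Schmidt shortcut fails (a $\pm 1$ quasirandom kernel has tiny cut norm but Hilbert--Schmidt norm $1$). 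What your proof buys is precisely that generality plus self-containedness: the indicator argument for the lower bound, the layer-cake estimate $\|T_\ccalW\|_{\infty\to 1} \leq 4\|\ccalW\|_\square$, and interpolation against $\|T_\ccalW\|_{1\to\infty} \leq 1$ work for any bounded symmetric kernel and give $2\sqrt{\|\ccalW\|_\square}$.

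Two caveats. First, the real-scalar Riesz--Thorin issue you flag resolves in your favor even when handled crudely: complexify $T_\ccalW$; the $1\to\infty$ endpoint is unchanged, the $\infty\to 1$ endpoint at worst doubles to $8\|\ccalW\|_\square$ (split into real and imaginary parts), and the real $2\to 2$ norm is at most the complex one, so interpolation still lands exactly on $\sqrt{8\|\ccalW\|_\square}$. Second, the elementary fallback you sketch does not reach the stated bound: truncating at level $M$ gives $4M^2\|\ccalW\|_\square$ from the bounded pieces and $O(1/M)$ from the tails (via $\|f\mathbf{1}_{\{|f|>M\}}\|_{L^1} \leq 1/M$), and optimizing over $M$ only yields a bound of order $\|\ccalW\|_\square^{1/3}$, which is weaker than $\sqrt{8\|\ccalW\|_\square}$ when the cut norm is small. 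So keep the interpolation step (or, in this $[0,1]$-valued setting, the Hilbert--Schmidt shortcut above) as the actual proof and drop the truncation alternative.
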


This is a direct consequence of \cite[Theorem 3.7(a)]{borgs2008convergent} and the fact that $t(C_2,\ccalW)$ is the Hilbert-Schmidt norm of $T_\ccalW$, which dominates the $L^2$-induced operator norm.

We can now proceed with the proof of our lemma:

\begin{proof}[Proof of Lemma \ref{T:eigenvectors_convergence}]
For~$j \in C$, let~$\gamma = \sigma_j(T_\ccalW)$, $\Gamma = \{\sigma_i(T_\ccalW)\}_{i \neq j}$, $\omega = \sigma_j(T_{\ccalW_n})$, and $\Omega = \{\sigma_i(T_{\ccalW_n})\}_{i \neq j}$ in Proposition \ref{T:davis_kahan} to get
\begin{equation}\label{E:davis_kahan_1}
	\|{E_j - E_{jn}}\| \leq \frac{\pi}{2} \frac{\|{T_{\ccalW_n} - T_\ccalW}\|}{d_{jn}}
\end{equation}
where $E_j$ and $E_{jn}$ are the spectral projections of $T_\ccalW$ and $T_{\ccalW_n}$ with respect to their $j$-th eigenvalue and
\begin{multline*}
	d_{jn} = \min \big( |{\sigma_j - \sigma_{j+1}(T_{\ccalW_n})}|,
		|{\sigma_j - \sigma_{j-1}(T_{\ccalW_n})}|, \\
		|{\sigma_{j+1} - \sigma_{j}(T_{\ccalW_n})}|,
		|{\sigma_{j-1} - \sigma_{j}(T_{\ccalW_n})}|
	\big)
		\text{,}
\end{multline*}
where we omitted the dependence on $\ccalW$ by writing $\sigma_j = \sigma_j(T_\ccalW)$.

Fix $\epsilon > 0$. From Proposition \ref{T:eigenvalues_convergence}, we know we can find $N_1$ such that $|{d_{jn} - \delta_j}| \leq \delta_j/2$ for all $n > N_1$, where 
\begin{equation*}
\delta_j = \min\big( |{\sigma_j - \sigma_{j+1}}|, |{\sigma_j - \sigma_{j-1}}| \big)\ .
\end{equation*} 
Since $\ccalW$ is non-derogatory, $\delta_j > 0$. Additionally, the cut norm convergence of graphon sequences (Lemma \ref{cut_norm_conv}) together with Proposition \ref{T:norm_equivalence} implies there exists $N_2$ such that $\|{T_{\ccalW_n} - T_\ccalW}\| \leq \epsilon \delta_j/\pi$. Hence, for all $n > \max(N_1,N_2)$ it holds from \eqref{E:davis_kahan_1} that
\begin{equation}\label{E:convergence_calA}
	\|{E_j - E_{jn}}\| \leq \frac{\pi}{2} \frac{\epsilon \delta_j/\pi}{\delta_j/2} = \epsilon
		\text{.}
\end{equation}
Since $\epsilon$ is arbitrary, \eqref{E:convergence_calA} proves that the projections onto the eigenfunctions of the same eigenvalue converge. In other words, the eigenfunction sequence $\varphi_j(T_{\ccalW_n})$ itself converges weakly.

To proceed, let us apply Proposition \ref{T:davis_kahan} to the subspace spanned by the remaining eigenfunctions with indices not in $C$. Explicitly, let $\gamma = \{\sigma_i(T_\ccalW)\}_{i \notin C}$, $\Gamma = \{\sigma_i(T_\ccalW)\}_{i \in C}$, $\omega = \{\sigma_i(T_{\ccalW_n})\}_{i \notin C}$, and~$\Omega = \{\sigma_i(T_{\ccalW_n})\}_{i \in C}$ in \eqref{E:davis_kahan} to get
\begin{equation}\label{E:davis_kahan_2}
	\|{E^\prime - E^\prime_{n}}\| \leq \frac{\pi}{2} \frac{\|{T_{\ccalW_n} - T_\ccalW}\|}{d_{n}}
		\text{,}
\end{equation}
where $E^\prime$ and $E^\prime_n$ are the projections onto the subspaces given by $\ccalE = \text{span}\left( \{\varphi_i(T_\ccalW)\}_{i \notin C} \right)$ and $\ccalE_n = \text{span}\left( \{\varphi_i(T_{\ccalW_n})\}_{i \notin C} \right)$ respectively and $d_{n} = \inf_{i \notin C} \left( |{c - \sigma_{i}(T_{\ccalW_n})}| \right)$. Since the graphon $\ccalW$ is non-derogatory, there exists an $N_0$ such that $d_n > 0$ for all $n > N_0$ and we can use the same argument as above to obtain that $E^\prime_{n} \to E^\prime$ in operator norm.

To see how this implies that $\varphi_i(T_{\ccalW_n}) \to \phi \in \ccalE$ for all $i \notin C$, suppose this is not the case. Then, $\|{\phi - E^\prime(\phi)}\| \geq \epsilon > 0$ since $\phi \notin \ccalE$. Without loss of generality, we assume that $\|{\phi}\| = 1$ (if not, simply normalize $\phi$: since $\ccalE$ is a subspace $\phi \notin \ccalE \Leftrightarrow K \phi \notin \ccalE$ for any $K > 0$). Notice, however, that there exists $N^\prime$ such that $\|{\varphi_i(T_{\ccalW_n}) - \phi}\| \leq \epsilon/8$ and $\|{E^\prime(\phi) - E_n^\prime(\phi)}\| \leq \epsilon/4$ for all $n > N^\prime$, which implies that $\|{\phi - E^\prime(\phi)}\| \leq \epsilon/2$, contradicting the hypothesis. Indeed,
\begin{align*}
	\|{\phi - E^\prime(\phi)}\| =& \|\phi - \varphi_i(T_{\ccalW_n})
		+ E_n^\prime(\phi) - E^\prime(\phi) + 
		\\		
		&E_n^\prime(\varphi_i(T_{\ccalW_n}) - \phi)\| \leq \|{\phi - 	\varphi_i(T_{\ccalW_n})}\| + 
		\\		
		&\|{E_n^\prime(\phi) - E^\prime(\phi)}\| + \|{E_n^\prime(\varphi_i(T_{\ccalW_n}) - \phi)}\|
		\text{.}
\end{align*}
Then, using Cauchy-Schwarz and the fact that $E_n^\prime$ is an orthogonal projection, i.e., $\|{E_n^\prime}\| = 1$, yields
\begin{align*}
	\|{\phi - E^\prime(\phi)}\| \leq 2\|{\phi - \varphi_i(T_{\ccalW_n})}\|
		+ \|{E_n^\prime(\phi) - E^\prime(\phi)}\|
		\text{.}
\end{align*}
which for all $n > N^\prime$ reduces to
\begin{align*}
	\|{\phi - E^\prime(\phi)}\| &\leq \frac{\epsilon}{2}
		\text{,}
\end{align*}
contradicting the fact that~$\phi \notin \ccalE$ and concluding the proof.
\end{proof}

\section{The space of non-derogatory graphons is dense}
\begin{proposition} \label{T:non_derog_graphons_dense}
Non-derogatory graphons are dense in the space of graphons with respect to the cut norm.
\end{proposition}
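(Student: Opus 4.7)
The plan is to reduce the infinite-dimensional problem to a finite matrix perturbation: approximate $\ccalW$ by a step function graphon whose operator is governed by a single symmetric matrix $A$, then invoke the classical genericity fact that symmetric matrices with a repeated eigenvalue form a nowhere-dense subvariety, so a small entrywise perturbation of $A$ breaks all eigenvalue degeneracies while keeping the entries in $[0,1]$. Given $\epsilon > 0$, I would set $\delta = \epsilon/3$ and carry this out in three stages.

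First, define $\ccalW^{(1)} = (1-2\delta)\ccalW + \delta$, which takes values in $[\delta,1-\delta]$ and satisfies $\|\ccalW - \ccalW^{(1)}\|_\square \leq \|\delta(2\ccalW-1)\|_{L^1} \leq \delta$; this creates room for a later perturbation to remain in $[0,1]$. Second, let $\ccalW^{(2)}$ be obtained by averaging $\ccalW^{(1)}$ on an equipartition of $[0,1]$ into $N$ intervals, with $N$ large enough that $\|\ccalW^{(1)}-\ccalW^{(2)}\|_\square < \delta$ (step function graphons are dense in cut norm, as is implicit in Lemma \ref{cut_norm_conv}). Averaging preserves the range $[\delta,1-\delta]$, so $\ccalW^{(2)}$ corresponds to a symmetric matrix $A \in [\delta,1-\delta]^{N\times N}$, and by Lemma \ref{T:induced_graphon} the non-zero eigenvalues of $T_{\ccalW^{(2)}}$ are exactly $\lambda_i(A)/N$. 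Third, pick a symmetric $A'$ with all non-zero eigenvalues distinct and $\max_{i,j}|A_{ij}-A'_{ij}|<\delta$; such an $A'$ exists because the locus of symmetric $N\times N$ matrices with a repeated eigenvalue is a closed real-algebraic subvariety of codimension two (the von Neumann--Wigner non-crossing phenomenon), and is therefore nowhere dense in the space of symmetric matrices. Since $A\in[\delta,1-\delta]^{N\times N}$, we have $A'\in[0,1]^{N\times N}$, so the associated step function graphon $\ccalW^{(3)}$ is a valid graphon, non-derogatory by Lemma \ref{T:induced_graphon}, and satisfies $\|\ccalW^{(2)}-\ccalW^{(3)}\|_\square \leq \max_{i,j}|A_{ij}-A'_{ij}|<\delta$. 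Three triangle inequalities then yield $\|\ccalW-\ccalW^{(3)}\|_\square < \epsilon$.

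The main obstacle is compatibility with the graphon constraint $\ccalW':[0,1]^2\to[0,1]$. A naive spectral attack, replacing each $\sigma_i$ by a shifted $\sigma'_i$ in the Mercer-type expansion $\sum_i \sigma_i \varphi_i\otimes\varphi_i$ of $\ccalW$, can easily produce a kernel that exits $[0,1]$, since the eigenfunctions $\varphi_i$ admit no uniform $L^\infty$ bound and even small spectral shifts may be amplified pointwise. Stages one and two are designed precisely to move into the interior of $[0,1]^{N\times N}$ so that the genericity argument in stage three can be applied without losing validity as a graphon. A secondary technical point is the codimension-two assertion, which can be justified by diagonalizing a putative degenerate $A$ on a double-eigenvalue block and observing that forcing two eigenvalues to remain equal under a symmetric perturbation imposes two independent linear conditions on the corresponding $2\times 2$ principal block.
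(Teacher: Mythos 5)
Your proof is correct, and it takes a genuinely different route from the paper's. The paper first shows (its Prop.~\ref{T:nonderogatory_density}) that operators induced by non-derogatory graphons are dense among compact self-adjoint operators in the $L^2$-induced operator norm --- by truncating the spectral expansion of $T_\ccalW$ and perturbing finitely many eigenvalues $\sigma_i \mapsto \sigma_i + \delta_i$ --- and then transfers back to the cut norm via the two-sided comparison $\|\ccalW\|_\square \leq \|T_\ccalW\| \leq \sqrt{8\|\ccalW\|_\square}$. You instead stay entirely inside the space of graphons: push $\ccalW$ into the interior $[\delta,1-\delta]$, approximate by a step function on an equipartition (cut norm bounded by $L^1$, and averaging converges in $L^1$), and then perturb the associated symmetric matrix $A$ using the genericity of simple spectra (the discriminant locus is a proper closed subvariety, so it is nowhere dense; codimension two is not even needed). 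The main thing your route buys is precisely the point you flag: the approximant is manifestly a valid graphon, with values in $[0,1]$, whereas the paper's construction defines $\ccalW_n$ ``through its operator'' and never checks that the perturbed, truncated kernel $\sum_{|i|\leq n}(\sigma_i+\delta_i)\varphi_i(u)\varphi_i(v)$ stays in $[0,1]$ --- your stage-one rescaling is exactly the device that repairs this. What the paper's route buys is brevity and a statement at the level of general compact self-adjoint operators. One shared caveat: any step (finite-rank) graphon has $0$ as an eigenvalue of infinite multiplicity, so under a literal reading of Def.~\ref{defn:non-derog} neither your $\ccalW^{(3)}$ nor the paper's $\ccalW_n$ is non-derogatory; both arguments require reading ``non-derogatory'' as ``all nonzero eigenvalues are simple,'' which is also the only property used in Lemma~\ref{T:eigenvectors_convergence}. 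Minor bookkeeping: your citation of Lemma~\ref{cut_norm_conv} for density of step graphons is not quite the right reference (that lemma concerns convergent graph sequences), but the $L^1$ martingale/uniform-continuity argument you implicitly use is standard and suffices; and you should ask $A'$ to have \emph{all} eigenvalues simple (which genericity gives for free) rather than only the nonzero ones, so that distinctness of the induced nonzero graphon eigenvalues $\lambda_i(A')/N$ follows directly from Lemma~\ref{T:induced_graphon}.
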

\begin{proof}
This is due to the fact that the operators induced by non-derogatory graphons are dense in the topology induced by the $L^2$ operator norm on the space of compact, self-adjoint operators, cf. Proposition \ref{T:nonderogatory_density} below. Since this topology is equivalent to the one induced by the cut norm, this implies that non-derogatory graphons are also dense in the space of graphons with respect to the cut norm.
\end{proof}

\begin{proposition}\label{T:nonderogatory_density}
The set of operators induced by non-derogatory graphons is dense in the space of linear, compact, self-adjoint operators with respect to the $L^2$-induced norm.
\end{proposition}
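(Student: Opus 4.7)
The plan is a spectral perturbation argument: leave the eigenfunctions of the target operator untouched and nudge each eigenvalue by an arbitrarily small amount in order to break all degeneracies. Because a perturbation that is diagonal in an orthonormal basis has operator norm equal to the supremum of its entries, these nudges translate directly into $L^2$-operator-norm proximity.

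Concretely, I would first invoke the spectral theorem for compact self-adjoint operators on $L^2[0,1]$ to write the target operator $T$ as
\begin{equation*}
T = \sum_{j} \sigma_j \sum_{k=1}^{m_j} \varphi_j^k \otimes \varphi_j^k,
\end{equation*}
where the $\sigma_j$ are the distinct nonzero eigenvalues, $m_j < \infty$ is the multiplicity of $\sigma_j$, the $\{\varphi_j^k\}_{k=1}^{m_j}$ form an orthonormal basis of the $j$-th eigenspace, and $\sigma_j \to 0$. Given $\epsilon > 0$, I would then pick real numbers $\delta_j^k$ with $|\delta_j^k| < \epsilon$, chosen so that the shifted values $\{\sigma_j + \delta_j^k\}_{j,k}$ are pairwise distinct and nonzero; this is a generic condition, failing only on a countable union of hyperplanes in the parameter space and hence achievable. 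Setting
\begin{equation*}
T' := \sum_{j,k} (\sigma_j + \delta_j^k)\, \varphi_j^k \otimes \varphi_j^k,
\end{equation*}
the difference $T - T'$ is diagonal in the orthonormal system $\{\varphi_j^k\}$ with entries of modulus bounded by $\epsilon$, so $\|T - T'\|_{L^2} \leq \epsilon$, and $T'$ has simple spectrum by construction.

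The main obstacle will be verifying that $T'$ is genuinely induced by a \emph{graphon}, i.e., that its integral kernel $\ccalW'(u,v) = \ccalW(u,v) + \sum_{j,k}\delta_j^k\, \varphi_j^k(u)\varphi_j^k(v)$ is a $[0,1]$-valued symmetric measurable function. Symmetry and measurability are immediate from the spectral representation, but the range constraint is delicate at points where $\ccalW$ saturates $0$ or $1$. I would handle this by clipping $\ccalW'$ to $[0,1]$ pointwise and controlling the extra discrepancy via Proposition \ref{T:norm_equivalence}, which bounds the $L^2$-induced operator norm by $\sqrt{8\|\cdot\|_\square}$ of the cut norm; since the clipping only alters $\ccalW'$ on the subset of $[0,1]^2$ where the unconstrained perturbation exits $[0,1]$, and this subset has Lebesgue measure tending to $0$ with $\epsilon$, its cut-norm contribution also vanishes. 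A final infinitesimal generic refinement of the eigenvalues of the clipped operator restores simplicity of the spectrum if clipping accidentally re-introduced coincidences, completing the density statement. The passage from operator-norm perturbation of $T$ to pointwise admissibility of the kernel $\ccalW'$ is the step where the most care is required.
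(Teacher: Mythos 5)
Your core construction is the same eigenvalue-perturbation idea as the paper's proof: keep the eigenfunctions of $T_\ccalW$, shift the eigenvalues by tiny amounts chosen to make them pairwise distinct, and use the fact that a perturbation diagonal in an orthonormal system has operator norm bounded by the supremum of its entries. The paper does this on a finite-rank truncation (only the eigenvalues with $|i|\leq n$, with $|\delta_i|\leq \epsilon/(2\sqrt{n})$), controlling the discarded tail by square-summability of the spectral coefficients; you instead perturb every eigenvalue, which is fine in principle but needs the extra requirement $\delta_j^k \to 0$ (otherwise $T'$ is no longer compact) — the paper's truncation makes this automatic. Note also that the paper never attempts to verify that the perturbed kernel is a $[0,1]$-valued graphon; it simply defines the approximant through its operator, so on that point you are holding yourself to a stricter standard than the source.

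The genuine gap is precisely in the step you flag as delicate, and your proposed fix does not close it. First, the claim that the set where $\ccalW'$ exits $[0,1]$ has Lebesgue measure tending to $0$ with $\epsilon$ is unsupported: the eigenfunctions are only $L^2$ functions, so the perturbation kernel $\sum_{j,k}\delta_j^k\varphi_j^k(u)\varphi_j^k(v)$ need not be pointwise small (or even bounded), and if $\ccalW$ equals $1$ on a set of positive measure, any upward perturbation leaves $[0,1]$ on that entire set no matter how small $\epsilon$ is. What can be controlled is the size of the overshoot rather than its support: the clipped and unclipped kernels differ pointwise by at most the magnitude of the perturbation, so if you choose $(\delta_j^k)$ square-summable the clipping error has $L^1$, hence cut, norm at most $\bigl(\sum_{j,k}(\delta_j^k)^2\bigr)^{1/2}$, and Proposition \ref{T:norm_equivalence} converts this into an operator-norm bound — repairable, but not by the measure argument you give. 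Second, and more seriously, the "final infinitesimal generic refinement of the eigenvalues of the clipped operator" is circular: clipping is not a spectral operation, so the clipped operator's eigenvalues are no longer the $\sigma_j+\delta_j^k$, and re-perturbing them to restore simplicity produces a new kernel that may again leave $[0,1]$, putting you back at the same admissibility question with no argument that this iteration terminates. As written, the proposal therefore does not establish that the approximating operator is induced by a non-derogatory graphon; you need a one-shot argument (for instance, constraints on the signs and sizes of the $\delta_j^k$ ensuring no clipping is needed, or a direct proof that the clipped kernel itself is non-derogatory), or else drop the admissibility verification as the paper implicitly does.
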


\begin{proof}

This is a direct consequence of the fact that every compact, self-adjoint operator is the limit of a sequence of finite rank operators. To see why this is the case, recall that the eigenfunctions $\{\varphi_i\}$ form an orthonormal basis of $L^2([0,1])$ \cite[Chapter 28, Theorem 3]{lax02-functional}. Hence, since $\ccalW \in L^2([0,1]^2)$, the induced $T_\ccalW$ has finite $L^2$-norm and the sequence $\sum_{i \in \mbZ \setminus \{0\}} |{\langle{T_{\ccalW}(X)},{\varphi_i}\rangle}|^2$ is convergent and can be arranged so that for every $\epsilon > 0$, there exists $N$ such that
\begin{equation}\label{E:small_tail}
	\sum_{|{i}| > n} |{\langle{T_{\ccalW}(X)},{\varphi_i}\rangle}|^2 \leq \frac{\epsilon^2 \|{X}\|}{2}
		\text{, for all } n > N \text{.}
\end{equation}

Fix a graphon $\ccalW$. We now show that for any $\epsilon > 0$, there exists a non-derogatory graphon $\ccalW^\prime$ such that $\|{T_\ccalW - T_{\ccalW^\prime}}\| \leq \epsilon$. To do so, define the graphon $\ccalW_n$ through its operator as in
\begin{equation}
	T_{\ccalW_n}(X) = \sum_{|{i}| \leq n}
		\langle{T_{\ccalW}(X)},{\varphi_i}\rangle \varphi_i
		+ \sum_{|{i}| \leq n} \delta_i \varphi_i
		\text{,}
\end{equation}
where the $\delta_i$ are chosen so that $\sigma_i + \delta_i \neq \sigma_j + \delta_j$ for all $|{i}|,|{j}| \leq n$ and $|{\delta_i}| \leq \epsilon/(2\sqrt{n})$. In other words, the $\delta_i$ are small perturbations chosen to guarantee that $T_{\ccalW_n}$ is non-derogatory. Since the $\{\varphi_i\}$ form an orthonormal basis, we obtain that
\begin{align*}
	\|{T_\ccalW - T_{\ccalW_n}}\|^2 &= \sup_{\|{X}\| = 1} \|{T_\ccalW(X) - T_{\ccalW_n}(X)}\|^2	=
	\\
	{}&\sum_{|{i}| \leq n} \delta_i^2
		+ \sup_{\|{X}\| = 1} \sum_{|{i}| > n} |{\langle{T_{\ccalW}(X)},{\varphi_i}\rangle}|^2
	\\
	{}&\leq \frac{\epsilon^2}{2}
		+ \sup_{\|{X}\| = 1} \sum_{|{i}| > n} |{\langle{T_{\ccalW}(X)},{\varphi_i}\rangle}|^2
		\text{.}
\end{align*}
Using \eqref{E:small_tail} and taking $\ccalW^\prime = \ccalW_N$, we conclude that $\|{T_\ccalW - T_{\ccalW^\prime}}\| \leq \epsilon$.
\end{proof}

\bibliographystyle{IEEEbib}
\bibliography{myIEEEabrv,bib-graphon}

\end{document}